\documentclass[prx,twocolumn,showpacs,amsmath,amssymb,superscriptaddress,floatfix,longbibliography,nofootinbib]{revtex4-2}
\usepackage{times}
\usepackage{helvet}
\usepackage{courier}
\usepackage{hyperref}
\hypersetup{
  hypertexnames=false, % use unique link anchors for deferred REVTeX floats
  colorlinks=true,    % false: boxed links; true: colored links
  linkcolor=cyan,     % color of internal links
  citecolor=magenta,    % color of links to bibliography
  filecolor=magenta,   % color of file links
  urlcolor=cyan,      % color of external links
  runcolor=cyan
}
\usepackage{graphicx}
\usepackage{orcidlink}
\usepackage{algorithm}
\usepackage{algorithmic}
\usepackage{newfloat}
\usepackage{listings}
\usepackage{amsmath}
\usepackage{amsfonts}
\usepackage{amssymb}
\usepackage{amsthm}
\usepackage{braket}
\usepackage[capitalize]{cleveref}
\crefname{mytheorem}{Theorem}{Theorems}
\Crefname{mytheorem}{Theorem}{Theorems}
\newcommand{\PRR}{\mathrm{PRR}}
\usepackage{xcolor}
\usepackage{booktabs}
\usepackage{tikz}
\usepackage{tikz-cd}
\usepackage{mathtools}
\usepackage{subfigure}
\usetikzlibrary{arrows.meta, positioning}

\newtheorem{mytheorem}{Theorem}

\newcommand{\bes} {\begin{subequations}}
\newcommand{\ees} {\end{subequations}}

\def\>{\rangle}
\def\<{\langle}

\newcommand{\ketb}[2]{|{#1}\>\!\<#2|}

\begin{document}

\title{Trading Circuit Depth for Pulse Sparsity in Chromatic Dynamical Decoupling}

\author{Amy F. Brown\,\orcidlink{0000-0001-6664-6494}}
\affiliation{Department of Physics \& Astronomy, University of Southern California,
Los Angeles, California 90089, USA}
\affiliation{Center for Quantum Information Science \& Technology, University of
Southern California, Los Angeles, California 90089, USA}
\author{Daniel A. Lidar\,\orcidlink{0000-0002-1671-1515}}
\affiliation{Department of Physics \& Astronomy, University of Southern California,
Los Angeles, California 90089, USA}
\affiliation{Center for Quantum Information Science \& Technology, University of
Southern California, Los Angeles, California 90089, USA}
\affiliation{Department of Electrical \& Computer Engineering, University of Southern California,
Los Angeles, California 90089, USA}
\affiliation{Department of Chemistry, University of Southern California, Los Angeles,
California 90089, USA}
\affiliation{Quantum Elements, 2829 Townsgate Road, Westlake Village, California 91361, USA}

\begin{abstract}
Suppressing decoherence and crosstalk systematically across large networks of qubits is a pressing concern as qubit counts increase rapidly.
General multi-qubit dynamical decoupling (DD) has historically been based on Hadamard matrices and orthogonal arrays, with instantaneous-pulse sequences whose circuit depth scales linearly with the number of qubits.
Chromatic-Hadamard DD (CHaDD) improves upon these methods by properly coloring the hardware graph and assigning to each color a row of a Hadamard matrix, resulting in a circuit depth that scales linearly with the number of colors $C$, which can be as low as the chromatic number of the graph.
Here, we explore the tradeoff between circuit depth and the pulse repetition rate (PRR), the average fraction of time steps in which a qubit is pulsed, i.e., a measure of pulse density.
We introduce two single-axis sequences based on binary and Gray matrices, Chromatic-Binary DD (CBDD) and Chromatic-Gray DD (CGDD), which, for $C>2$, are special cases of non-minimum-depth CHaDD and achieve $\PRR<2/C$ and $\PRR=1/C$, respectively, compared with a PRR above $1/4$ for CHaDD, at the price of a circuit depth of $2^C$.
In state preservation experiments on a $127$-qubit IBM processor with $C=3$, the less dense sequences outperform CHaDD at short times when the pulses are not robust; this advantage is largely eliminated by robust versions designed to compensate for coherent pulse errors, and among the robust sequences, CGDD uses the fewest pulses over a fixed protection time.
Across non-robust $C=3$ and $C=5$ schedules, equal PRR on a fixed qubit subset is empirically associated with similar state preservation.
Thus, the PRR serves as a practical figure of merit for chromatic DD whenever coherent pulse errors dominate.
\end{abstract}

\maketitle

\section{Introduction}

Quantum computing efforts in research and industry are rapidly scaling up qubit counts while exploring a widening variety of architectures and qubit connectivities.
Error suppression methods must scale along with the hardware, lest they become the bottleneck.
Generic methods that apply across architectures and require only single-qubit control are of particular interest.

Dynamical decoupling (DD)~\cite{Viola:98,Viola:99,Zanardi:1999fk} is an error suppression method with roots in nuclear magnetic resonance~\cite{Hahn:50,CPMG1958,Maudsley:1986ty} that has experienced a resurgence of interest with the advent of programmable, noisy quantum computers~\cite{Preskill2018} (for a review, see Ref.~\cite{Suter:2016aa}). In particular, it has enabled quantum speedup demonstrations~\cite{pokharel2022demonstration} and improved algorithmic performance in a variety of settings~\cite{jurcevicDemonstrationQuantumVolume2021,baumer2023efficient, Baumer2024}.
DD controls the evolution of a system by applying pulses targeting its qubits at specific times. Conjugation by a pulse that anticommutes with a Hamiltonian error term reverses the sign of that term, and properly scheduled sign reversals average the targeted decoherence and crosstalk terms to zero to leading order, provided the pulse durations and intervals are sufficiently short on the timescale over which the bath evolves.
DD sequences are commonly inserted during idle periods when no other operations occur to avoid interference with computation and measurement~\cite{Ng:2011dn}. Used in this manner, DD has also played an increasingly significant supporting role in recent demonstrations of quantum error correction and fault tolerance~\cite{Google-surface-code:22,Postler:2024aa,Acharya:2025aa,vezvaee2025surfacecodescalingheavyhex}.

\begin{figure*}[t]
\centering
(a)
\includegraphics[width=0.98\columnwidth]{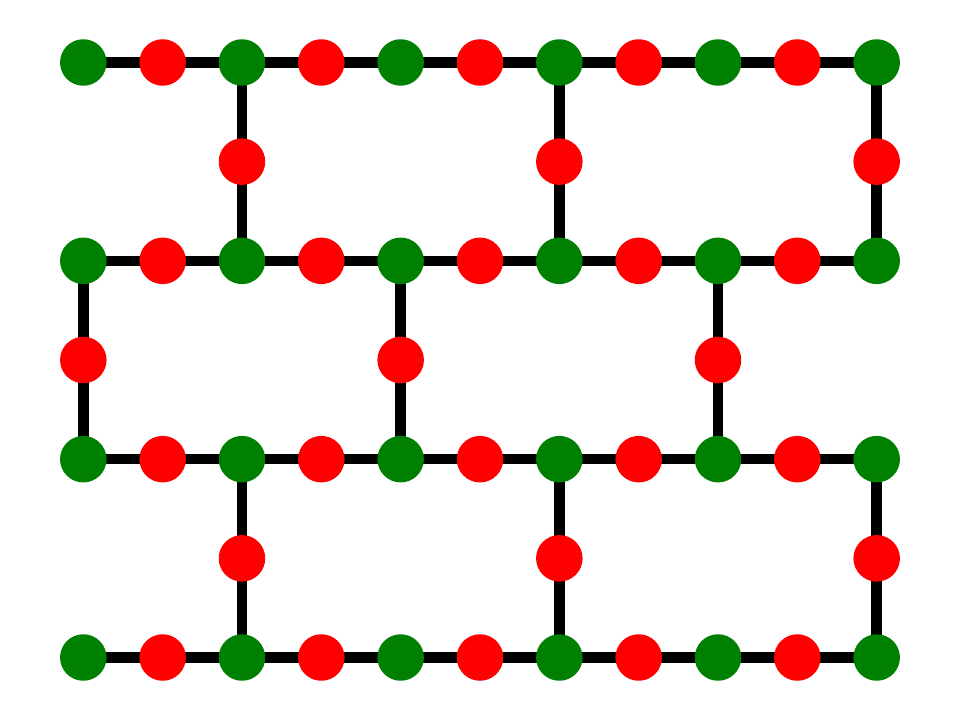}
(b)
\includegraphics[width=0.98\columnwidth]{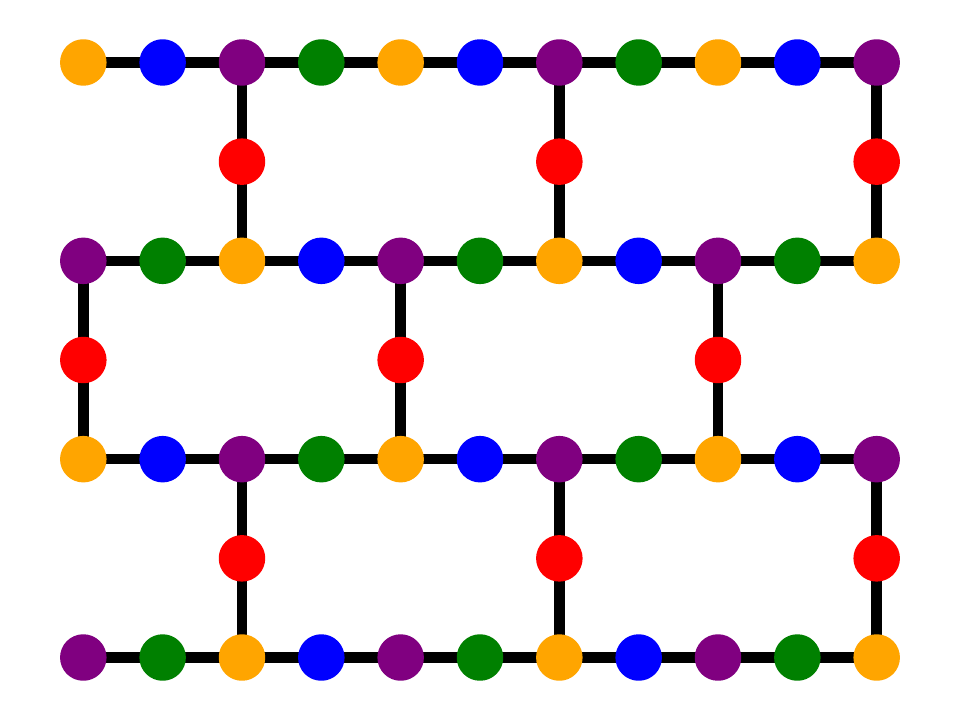}
\\
(c)
\includegraphics[width=0.63\columnwidth]{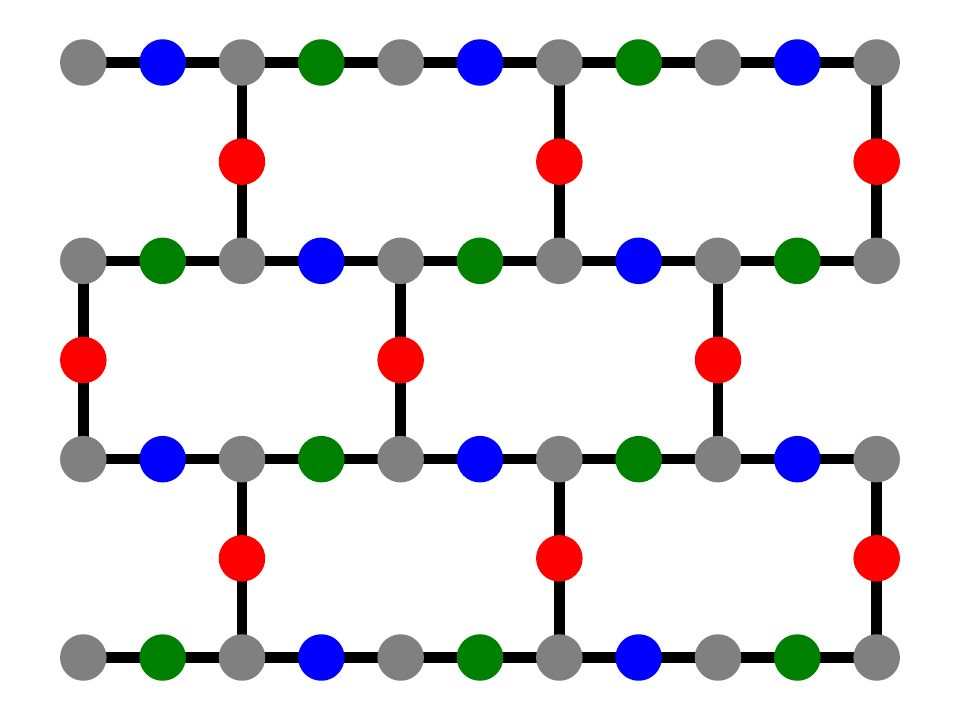}
(d)
\includegraphics[width=0.63\columnwidth]{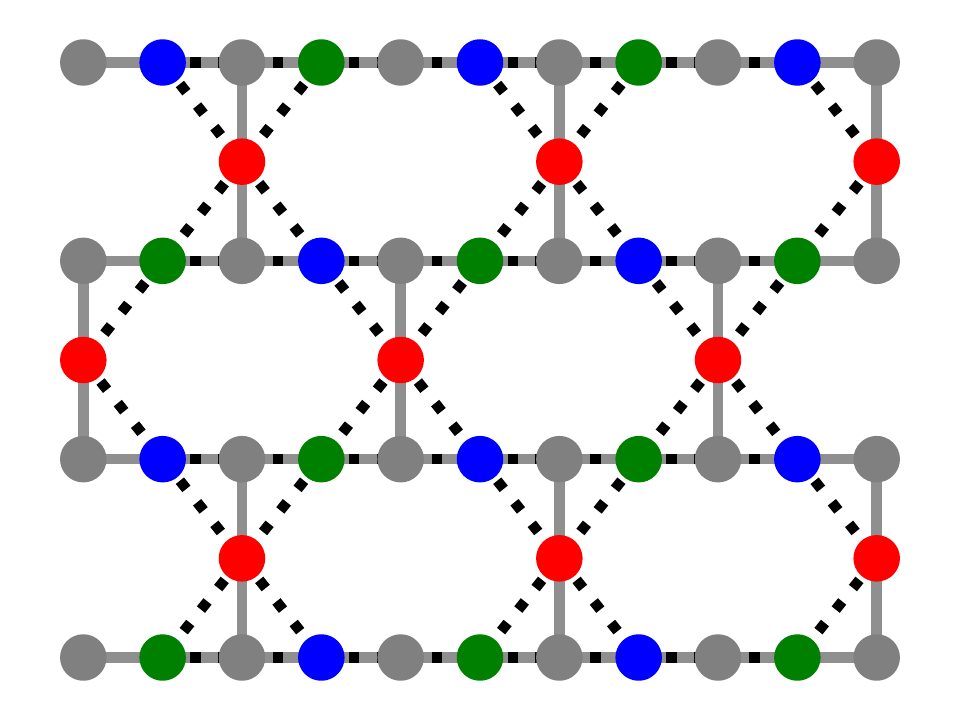}
(e)
\includegraphics[width=0.63\columnwidth]{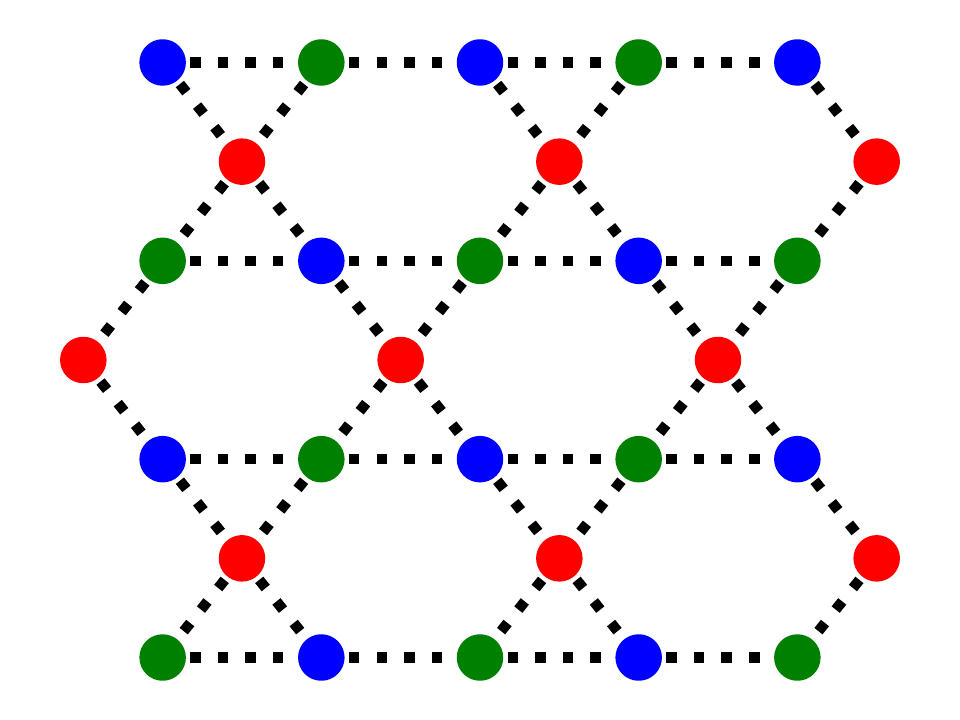}
\caption{
(a)
$C=2$: Minimum proper (distance-1 2-)coloring of a representative section of the IBM heavy-hex lattice, with the edge qubits of the underlying hexagonal lattice (bivalent in the bulk) colored red and the vertex qubits (trivalent in the bulk) colored green.
(b)
$C=5$: Distance-3 5-coloring of the same section.
(c)
$C=3$: Proper 3-coloring of the active red, green, and blue (RGB) interaction subgraph embedded in the IBM heavy-hex lattice; gray qubits are spectators and are not counted among the $C$ colors.
Note that the RGB qubits are the same in the $C=5$ and $C=3$ colorings.
(d)
Solid gray edges [black in (a-c)] represent intentional nearest-neighbor couplings corresponding to (distance-1) edges in the hardware graph;
dotted black (distance-2) edges represent all next-nearest-neighbor couplings among the RGB qubits, which are not intentionally engineered but are often measurable and can be suppressed.
(e)
Removing the gray qubits and any edges incident on them from the graph results in an embedded, properly 3-colored kagome lattice.
}
\label{fig:heavy-hex-embedded-3coloring}
\end{figure*}

The majority of DD research has focused on the single-qubit setting, where the theory is mature and includes high-order sequences~\cite{Uhrig:2007qf} as well as sequences that are robust to pulse imperfections~\cite{Souza:2012aa,Genov2017}.
Much less attention has been devoted to the multi-qubit setting, where an important role is played by residual two-body couplings resulting in crosstalk~\cite{Sarovar2020detectingcrosstalk}, such as the always-on $ZZ$ interaction between neighboring superconducting qubits, which acts even while the qubits idle~\cite{Ku:2020aa,Tripathi:2022aa}.
Identical sequences applied to every qubit cancel single-qubit decoherence but commute with these couplings, leaving them untouched. To resolve this, the pulses applied to different qubits must be scheduled relative to one another.
Most multi-qubit results have been based on certain sign matrices, specifically, Hadamard matrices and orthogonal arrays~\cite{Leung:01,Stollsteimer:01,Rotteler:2006aa,Wocjan:2006aa,Bookatz:2016aa}, with instantaneous-pulse constructions whose circuit depth scales linearly with the number of qubits.
The syncopated DD of Ref.~\cite{evert2024syncopated} instead schedules mutually syncopating periodic sequences (related by time shifts, frequency multiplication, or operator alternation) on the color classes of the crosstalk graph, one sequence per color; the minimal length of syncopated sequences needed to cancel all $ZZ$-type crosstalk was found to be $2^{\chi}$, where $\chi$ is the chromatic number of the graph.

By combining graph coloring with Hadamard matrices, we showed in Ref.~\cite{brown2024efficient} that Chromatic-Hadamard DD (CHaDD) decouples an entire hardware graph in a number of time steps that scales linearly with the chromatic number rather than with the number of qubits, for both single-axis and multi-axis decoupling (with a smaller prefactor in the single-axis case); for hardware graphs with a constant chromatic number, such as the heavy-hex lattice ($\chi=2$), the depth is independent of the number of qubits.
More recently, graph coloring was combined with classical error-detecting codes to construct sequences that cancel selected $k$-local interactions~\cite{Nguyen2026color}.
CHaDD has also been used in logical state preservation experiments to mitigate dephasing and crosstalk~\cite{Joshi2026demonstratingnoiseadaptedquantumerror}.
Ref.~\cite{Kim2026highorder} has extended CHaDD to higher order in the weak-coupling regime: on a graph of chromatic number $\chi$, sequences of $O(\chi p)$ multi-qubit pulse events can cancel every error term linear in the system-bath coupling strength through order $p$ in the total evolution time, leaving only errors of order $p+1$ in time or of second order in the coupling strength.
Several related approaches schedule or compile crosstalk-suppressing DD at the circuit level~\cite{Mundada:2023aa,Zhou:2023aa,Shirizly:2024aa,CarreraVazquez:2024aa,Niu:2024aa,seif2024suppressing,Coote:2025aa,Hickman:2025aa}.

A chromatic DD sequence is characterized by four figures of merit: (i) the set of Hamiltonian terms it cancels, to a given order in the pulse interval; (ii) its circuit depth $N$, i.e., the number of time steps per sequence repetition, which must fit within the available idle window; (iii) its pulse repetition rate (PRR), i.e., the average fraction of time steps in which a qubit is pulsed; and (iv) its robustness to pulse imperfections.
The first is common to every chromatic sequence we consider here: all are single-axis, i.e., they involve only $X$ pulses, and cancel to first order every one- and two-body term except those of $X$ and $XX$ type (\cref{thm:chadd,thm:cbdd,thm:cgdd}). This is a relevant setting for fixed-coupler superconducting quantum processing units (QPUs), where dephasing and always-on $ZZ$ crosstalk are important idle errors~\cite{Tripathi:2022aa}; we restrict attention throughout to this first-order setting, and leave multi-axis and higher-order chromatic sequences outside the scope of this work.
The second was the focus of Ref.~\cite{brown2024efficient}.
The third and fourth are linked: the number of pulses a qubit accumulates over a fixed protection time is proportional to the PRR, and systematic pulse imperfections contribute coherent errors that accumulate unless the sequence is designed to compensate for them. These are the aspects we focus on in this work.

CHaDD attains a circuit depth of at most $2C$ for any properly $C$-colored graph, but at the price of a high pulse density: even for the pulse-count-minimizing assignment of colors to Hadamard rows, the PRR exceeds $1/4$ (\cref{thm:chadd}), and for $C=3$ it equals $2/3$, i.e., each qubit is pulsed in two thirds of the time steps on average.
Whether the depth can be traded for a lower pulse density within the same sign-matrix framework, and whether such a trade improves state preservation on hardware, was left open.

Here, we address this question and introduce Chromatic-Binary DD (CBDD) and Chromatic-Gray DD (CGDD), two sparser schedules (except at $C=1,2,4$) whose nonconstant sign rows are rows of the $2^C$-column Hadamard matrix, so that the CHaDD averaging argument applies as is.
The main results of this work are as follows (\cref{thm:cbdd,thm:cgdd}): CBDD and CGDD cancel the same terms as CHaDD to first order, with $\PRR<2/C$ and $\PRR=1/C$, respectively; the price is a circuit depth of $2^C$, i.e., exponential rather than linear in the number of colors.
More generally, only two properties of the sign matrix are essential: each color-assigned row must add up to zero, and every pair of assigned rows must be orthogonal.
We use this observation to characterize the control groups of all three sequences, to identify the row assignment that minimizes the CHaDD pulse count, which we call chromatic-Walsh DD (CWDD) and whose PRR bounds enter \cref{thm:chadd}, and to show that CBDD and CGDD additionally cancel a large class of $k$-body terms that CHaDD row assignments need not.

We then compare the sequences on the 127-qubit \texttt{ibm\_strasbourg} QPU, using the embedded $C=3$ coloring of the heavy-hex lattice [\cref{fig:heavy-hex-embedded-3coloring}(c)] and state preservation experiments over the six Pauli eigenstates.
At short times, the sparser non-robust sequences perform better, in the reverse order of their PRRs; introducing robust variants, built from the universally robust UR4 pulse block~\cite{Genov2017}, largely erases this advantage.
Both observations support treating the PRR as a proxy for the coherent error accumulated from imperfect pulses.
Grouping non-robust sequences by their PRR on a fixed qubit subset, we further find that equal PRR is empirically associated with similar state preservation, even across different values of $C$. This situates the PRR metric as the key performance predictor in these comparisons.

The structure of this paper is as follows. In \cref{sec:background} we review graph coloring, the noise model, dynamical decoupling, the PRR, and CHaDD.
In \cref{sec:results} we introduce CBDD and CGDD, prove their decoupling properties, and compare their circuit depth and PRR with those of CHaDD and CWDD.
In \cref{sec:experiments} we compare the empirical performance of these sequences in preserving an initial state on an IBM QPU, including a comparison across $C=3$ and $C=5$ sequences of equal RGB PRR. We conclude in \cref{sec:conc}.
Appendix~\ref{app:CHaDD-thm-proof} contains the proof of the CHaDD theorem.

\section{Background}
\label{sec:background}

\subsection{Graph Coloring}
\label{sec:graph-coloring}

A (vertex) $C$-coloring of a graph $G=(V,E)$ is a map $f : V \rightarrow \{ 1, 2, \dots , C\}$
labeling each vertex of the graph with one of $C$ colors.
Throughout, $C$ denotes the number of nonempty color classes, so $f$ may be taken to be surjective.
A coloring $f$ is \textit{proper} if no edge connects two vertices of the same color, i.e., $uv \in E \implies f(u) \neq f(v)$.
The minimum number of colors that can be used to properly color a graph is called the \textit{chromatic number} and denoted $\chi$.
A vertex coloring is \textit{minimum} if it uses $\chi$ colors, i.e., $C=\chi$.
Note that any proper $C$-coloring must satisfy $C \geq \chi$.

The distance $d(u,v)$ between two vertices $u,v \in V$ is the length of the shortest path between them on the graph $G$.
There is a higher-distance analog of proper graph coloring called distance-$d$ coloring, which, for distinct vertices $u,v$, requires that $d(u,v)\leq d \implies f(u) \neq f(v)$, so that two vertices of the same color are at distance at least $d+1$, i.e., $f(u)=f(v) \implies d(u,v) > d$.
Proper coloring is the distance-1 case, since $uv \in E \iff d(u,v) = 1$.

In this work, we focus on the IBM heavy-hex lattice~\cite{Chamberland:2020aa}, which has chromatic number $\chi=2$; its minimum proper $\chi$-coloring is depicted in \cref{fig:heavy-hex-embedded-3coloring}(a).
We also consider colorings with more colors than the lattice strictly requires, i.e., $C>\chi$.
The first is a distance-3 5-coloring [\cref{fig:heavy-hex-embedded-3coloring}(b)], chosen so that no two qubits of the same color are within three edges of one another; this more than suffices to cover the next-nearest-neighbor couplings of \cref{fig:heavy-hex-embedded-3coloring}(d), which connect qubits at distance 2.
The second is a proper 3-coloring of the active red, green, and blue (RGB) interaction subgraph depicted in \cref{fig:heavy-hex-embedded-3coloring}(c), which is minimum for that subgraph and results in an effective kagome lattice as shown in \cref{fig:heavy-hex-embedded-3coloring}(e).
In this embedded construction, the graph $G$ used in the decoupling analysis has the RGB qubits as its vertex set; the gray qubits are spectators outside the $C=3$ color count, and their couplings to the RGB system can be incorporated into the bath operators of the noise model we describe next.

\subsection{Noise Model}
\label{sec:noise-model}

The $K$-local Hamiltonian $H$ of a quantum device is a sum of $k$-body Hamiltonians $H_k$,
\begin{align}
\label{eq:k-local-hamiltonian}
    H = \sum_{k=1}^K H_k,
\end{align}
where the $k$-body Hamiltonian $H_k$ is the sum of all Pauli terms with support on $k$ qubits:
\bes
\begin{align}
\label{eq:k-body-hamiltonian-decomposition}
    H_k &= \sum_{\vec\alpha \in \{x,y,z\}^k} H_k^{\vec\alpha} , \\
    \label{eq:k-body-hamiltonian}
    H_k^{\vec\alpha} &= \sum_{\substack{v_1,v_2,\dots,v_k \in V\\ v_1<v_2<\dots<v_k}} \sigma_{v_1}^{\alpha_1} \otimes \sigma_{v_2}^{\alpha_2} \otimes \dots \otimes \sigma_{v_k}^{\alpha_k} \otimes B_{\vec v_k}^{\vec\alpha} ,
\end{align}
\ees
where $\vec v_k = (v_1,v_2,\dots,v_k)$.
The operators $B_{\vec v_k}^{\vec\alpha}$ represent either a scalar multiple of the identity operator, in the case of unwanted system interactions (such as crosstalk), or bath operators acting purely outside the system; these give rise to closed- and open-system evolution, respectively.
We can restrict the sum in \cref{eq:k-body-hamiltonian} to $\vec v_k \in E_k$, where $E_k$ is a set of canonically ordered $k$-vertex hyperedges, to limit which interactions are included.

For simple graphs, we typically consider only $K=2$-local Hamiltonians, for which the $k=1$- and 2-body Hamiltonians are conveniently represented as sums over the vertices $V$ and edges $E$ of the graph $G=(V,E)$, respectively.
Such a Hamiltonian representing the graph $G$ is denoted $H_G$ and decomposed exactly as in Ref.~\cite{brown2024efficient} as 
\bes
\begin{align}
    \label{eq:graph-hamiltonian}
    H_G &= H_1 + H_2
    , \\
    \label{eq:1-body-hamiltonian}
    H_1 &= \sum_{\alpha \in \{x,y,z\}} H_1^\alpha
    , \hspace{5mm}
    H_1^\alpha = \sum_{v \in V} \sigma_v^\alpha \otimes B_v^\alpha
    , \\
    \label{eq:2-body-hamiltonian}
    H_2 &= \sum_{\alpha,\beta \in \{x,y,z\}} H_2^{\alpha\beta}
    , \hspace{5mm}
    H_2^{\alpha\beta} = \sum_{\substack{uv \in E\\ u<v}} \sigma_u^\alpha \sigma_v^\beta \otimes B_{uv}^{\alpha\beta}
    .
\end{align}
\ees
The edge set $E$ in \cref{eq:2-body-hamiltonian} includes both intentionally engineered couplings and unintentional couplings that are nevertheless measurable. A pure-bath term $I\otimes H_B$ may be added to $H_G$ without affecting any of the averaging arguments below, since it commutes with every control unitary; we omit it for brevity.
For later reference, let $f_\tau=e^{-i\tau H_G}$ denote the free (uncontrolled) evolution for a time duration $\tau$.

Every graph coloring partitions the vertex set by color, and every proper coloring partitions the edge set into ordered pairs of distinct colors:
\bes
\begin{align}
    \label{eq:graph-coloring-vertex-partition}
    V_c &\equiv \{ v \in V ~|~ f(v)=c \}
    , \\
    \label{eq:graph-coloring-edge-partition}
    E_{c_1,c_2} &\equiv \{ uv \in E ~|~ u<v, f(u)=c_1, f(v)=c_2 \}
    .
\end{align}
\ees
We can regroup the Hamiltonian terms in \cref{eq:1-body-hamiltonian,eq:2-body-hamiltonian} according to the above partition, with bath operators elided for brevity, as
\begin{align}
    H_1^\alpha = \sum_{c=1}^C \sum_{v \in V_c} \sigma_v^\alpha
    , \hspace{5mm}
    H_2^{\alpha\beta} = \sum_{c_1 \neq c_2} \sum_{uv \in E_{c_1,c_2}} \sigma_u^\alpha \sigma_v^\beta
    .
\end{align}

\subsection{Dynamical Decoupling}
\label{sec:dd}

DD is an open-loop error suppression technique in which pulses are inserted during idle periods and timed so that certain noise terms are canceled up to a prescribed order in time.
Throughout the analysis we adopt the ``bang-bang'' idealization, in which pulses are instantaneous and perfect; finite pulse durations enter in the experiments (\cref{sec:experiments}).
If a system evolves freely for time $\tau$ according to a unitary $f_\tau = e^{-i \tau H}$, conjugating this by a unitary $U_j$ results in the unitary $U_j^\dagger f_\tau U_j$, which is generated by the Hamiltonian $U_j^\dagger H U_j$.
If these conjugated free evolution periods are conjoined for $j=0,1,\dots,N-1$, the Baker-Campbell-Hausdorff formula shows that the total evolution is generated, to first order in time, by the average Hamiltonian~\cite{Waugh:68}
\begin{align}
    \label{eq:decoupling-averaged-hamiltonian}
    \overline{H} = \frac1N \sum_{j=0}^{N-1} U_j^\dagger H U_j .
\end{align}
When the list of conjugating unitaries contains every element of a finite group $\mathcal{G}$ with equal multiplicity, the original Hamiltonian $H$ is projected onto the commutant of $\mathcal{G}$, i.e., $\left[g,\overline{H}\right]=0~\forall~g\in\mathcal{G}$.

\subsection{Pulse Repetition Rate (PRR)}
\label{sec:prr}

Circuit depth $N$ and pulse repetition rate (PRR) were the primary measures of DD sequence efficiency in Ref.~\cite{brown2024efficient}.
For a DD sequence of circuit depth $N$ scheduled on $C$ colors, let $P_c$ denote the number of pulses each qubit of color $c$ undergoes in a single sequence repetition.
The PRR is then
\begin{align}
\label{eq:PRR}
    \PRR = \frac{\sum\limits_{c=1}^C |V_c| P_c}{N \sum\limits_{c=1}^C |V_c|} \approx \frac1{NC}\sum_{c=1}^C P_c  = \frac{P}{NC} ,
\end{align}
where $P=\sum_c P_c$ is the number of pulses across all colors.
In the second expression, we have made the approximation that the color classes are equally populated, which renders the PRR a property of the sequence alone; for simplicity, from here on we adopt this definition, $\PRR=P/(NC)$.
We may also evaluate the PRR on a subset of colors, i.e., on $\mathcal{C}' \subseteq \{1,2,\dots,C\}$, as follows:
\begin{align}
\label{eq:subset-prr}
    \PRR_{\mathcal{C}'} = \frac{\sum\limits_{c\in\mathcal{C}'}|V_c|P_c}{N\sum\limits_{c\in\mathcal{C}'}|V_c|} \approx \frac1{N|\mathcal{C}'|} \sum_{c\in \mathcal{C}'} P_c ,
\end{align}
with the same equal-population approximation.

The PRR is a dimensionless pulse density and is independent of the physical interval $\tau$.
If one pulse is applied to every qubit of every color in every one of the $N$ time steps, then $P_c=N$ for all $c$ and $\PRR=1$; otherwise, the PRR is the fraction of occupied color-time slots.
For uniform time steps, the corresponding average pulse rate per color is $\PRR/\tau$.
We show below that the PRR serves as a proxy for accumulated coherent error in sequences that are not robust to pulse imperfections.

\subsection{Chromatic-Hadamard DD (CHaDD)}
\label{sec:chadd}
Single-axis Chromatic-Hadamard DD (CHaDD)~\cite{brown2024efficient} schedules multi-qubit DD sequences according to a proper coloring of the graph $G$ of a graph Hamiltonian $H_G$ [\cref{eq:graph-hamiltonian}], using a Hadamard matrix of suitable size to assign each color to a distinct row $i>0$ of the matrix.
All qubits of a given color follow the same sequence determined by their assigned row, and qubits of different colors follow sequences that suppress not only the decoherence on each qubit but also the crosstalk between them.

Consider the unnormalized Hadamard sign matrix $W_\nu$ of dimension $N=2^\nu$,
\begin{align}
    \label{eq:hadamard-matrix}
    W_\nu \equiv \sum_{i,j=0}^{N-1} (-1)^{i \cdot j} \ketb{i}{j}
    ,
\end{align}
where $i$ and $j$ are identified with $\nu$-bit strings and the dot denotes their binary inner product. Here and below we use $\ket{j}$ ($\bra{j}$) to denote the $j$th standard column (row) basis vector.
Choosing
$\nu=\lfloor\log_2 C\rfloor+1=\lceil\log_2(C+1)\rceil$
ensures that $W_\nu$ has at least $C$ nonzero row indices.
We assign each color $c=1,\dots,C$ to a distinct row $i>0$ via an injective function $g$, i.e., $i=g(c)$.
Let us denote the simultaneous application of $X$ pulses to all qubits of a given color $c$ as
\begin{align}
\label{eq:xtilde}
    \widetilde{X}_c \equiv \bigotimes_{v \in V_c} X_v
    ,
\end{align}
and define the toggling-frame control unitary in the $j$th time step as
\begin{align}
    \label{eq:chadd-decoupling-unitary}
    U_j = \bigotimes_{c=1}^C \widetilde{X}_c^{g(c) \cdot j}
    .
\end{align}
This mirrors the $j$th column of the Hadamard matrix:
for every color $c=1,\dots,C$ assigned to row $i=g(c)$, $U_j$ contains $\widetilde{X}_c$ if the $(i,j)$th entry of $W_\nu$ is $-1$, i.e., if $i \cdot j \equiv 1 \pmod 2$.

Equivalently, 
qubits of color $c$ undergo an $X$ pulse in time step $j$ if the sign of the entries in row $i=g(c)$ changes between columns $j$ and $(j+1)\bmod N$.
In this sense, the rows of the Hadamard matrix are the ``bang-bang'' modulation functions of the DD sequences derived from them.
The number of pulses each qubit of color $c$ undergoes per sequence repetition is determined by counting the number of sign changes per row $i=g(c)$, including the wrap-around sign change from $j=N-1$ to $j=0$.
All sequences in this work are single-axis, i.e., they involve only $X$ pulses.

The following result is a direct extension of the single-axis CHaDD theorem of Ref.~\cite{brown2024efficient}, replacing a minimum $\chi(G)$-coloring by an arbitrary proper $C$-coloring and adding bounds on the PRR.
Since these changes are central to the constructions below, for completeness and the reader's convenience we repeat the theorem here and its proof in Appendix~\ref{app:CHaDD-thm-proof}.

\begin{mytheorem}[Single-axis Chromatic-Hadamard DD (CHaDD)]
\label{thm:chadd}
A circuit of depth $C < N \leq 2C$,
with $\frac14 + \frac1{2C} \leq \PRR < \frac12 + \frac7{6C}$,
involving only $X$ pulses,
suffices to cancel to first order in $\tau$
all terms in $H_G$ excluding $H_1^x$ and $H_2^{xx}$
on a qubit connectivity graph $G$ that can be properly $C$-colored.
\end{mytheorem}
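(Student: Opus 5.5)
The plan has two parts: first show, by the average Hamiltonian of \cref{eq:decoupling-averaged-hamiltonian}, that the control unitaries \cref{eq:chadd-decoupling-unitary} cancel the claimed terms; then bound the depth and the PRR for a suitable injective row assignment $g$.

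\textbf{Depth and decoupling.} With $\nu=\lceil\log_2(C+1)\rceil$ we have $2^{\nu-1}\le C<2^\nu$. Hence $N=2^\nu$ satisfies $C<N\le 2C$.

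Each $\widetilde X_c$ commutes with $\sigma^x_v$ and anticommutes with $\sigma^{y}_v,\sigma^z_v$ for $v\in V_c$. So a one-body term $\sigma^\alpha_v$ with $\alpha\in\{y,z\}$ and $f(v)=c$ picks up the factor $(-1)^{g(c)\cdot j}$ under conjugation by $U_j$. Its average is therefore proportional to $\sum_{j}(-1)^{g(c)\cdot j}$, which vanishes because $g(c)\neq0$ and the character sum over $\mathbb{Z}_2^\nu$ of a nontrivial character is zero.

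For an edge $uv\in E_{c_1,c_2}$ the coloring is proper, so $c_1\neq c_2$ and hence $g(c_1)\neq g(c_2)$. Consider $\sigma^\alpha_u\sigma^\beta_v$:
\begin{itemize}
\item If exactly one of $\alpha,\beta$ equals $x$, the sign is $(-1)^{g(c_i)\cdot j}$ for a single color, which averages to zero as above.
\item If neither equals $x$, the sign is $(-1)^{(g(c_1)\oplus g(c_2))\cdot j}$ with $g(c_1)\oplus g(c_2)\neq0$, again averaging to zero.
\end{itemize}
This is the orthogonality of distinct Hadamard rows. Only $H_1^x$ and $H_2^{xx}$ survive.

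To connect the toggling frame with physical pulses, I will note that the physical pulse applied after step $j$ is $U_{j+1}U_j^\dagger$, with indices mod $N$. Its color-$c$ factor is $\widetilde X_c$ exactly when row $g(c)$ changes sign between columns $j$ and $j+1$. So $P_c$ equals the cyclic number of sign changes of row $g(c)$, and the sequence closes on itself after $N$ steps.

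\textbf{PRR bounds.} Here I choose $g$ to minimize $P=\sum_c P_c$. The key combinatorial fact I will establish is that the cyclic sign-change counts of the $N-1$ nonconstant rows of $W_\nu$ form the multiset $\{2,2,4,4,\dots,N-2,N-2,N\}$.

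I will derive this by reordering the rows in sequency (Walsh) order. This is a bijection of rows, and the $k$th row in that order has exactly $k$ non-cyclic sign changes. The cyclic count adds the wrap-around change precisely when the first and last entries differ, which rounds $k$ up to the nearest even number. Proving the sequency fact cleanly is the main obstacle; I would use the Gray-code relation between the natural and sequency indices, or alternatively count directly. For the direct count, note that $j\oplus(j+1)=2^{t+1}-1$, where $t$ is the number of trailing ones of $j$. The transition after $j$ is then a sign change of row $i$ iff the low $t+1$ bits of $i$ have odd parity. Summing $2^{\nu-t-1}$ over such $t$, together with the wrap term, recovers the same multiset.

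Assigning the $C$ colors to the $C$ rows with the smallest counts then gives two cases:
\begin{itemize}
\item For $C=2m$: $P=2\cdot2(1+\cdots+m)=C(C+2)/2$.
\item For $C=2m+1$: $P=(C+1)^2/2$.
\end{itemize}

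With $\PRR=P/(NC)$ and $C+1\le N\le 2C$, the lower bound follows. For even $C$, $\PRR\ge (C+2)/(2N)\ge (C+2)/(4C)=\frac14+\frac1{2C}$, and odd $C$ gives a larger value. The upper bound also follows: $\PRR\le (C+1)^2/(2C(C+1))=\frac12+\frac1{2C}<\frac12+\frac7{6C}$, and the even case is smaller still. This establishes the stated range.
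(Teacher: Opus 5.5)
Your proposal is correct and takes essentially the same route as the paper: the decoupling follows from the sign modulation $(-1)^{g(c)\cdot j}$ averaging to zero for nonzero $g(c)$ and from orthogonality of distinct Hadamard rows, the depth from $N=2^{\lceil\log_2(C+1)\rceil}$, and the PRR from summing the Walsh cyclic counts $P_s=s+(s\bmod 2)$ over the $C$ lowest-sequency rows (your $C(C+2)/2$ and $(C+1)^2/2$ agree with the paper's $\frac{C(C+1)}2+\lceil\frac C2\rceil$). The differences are that you justify the sequency multiset, which the paper takes as a known property of the Walsh ordering, and that using $N\ge C+1$ you obtain the sharper upper bound $\PRR\le\frac12+\frac1{2C}$, which implies the stated one.
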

Note that unlike Ref.~\cite{brown2024efficient}, in addition to $C$ not needing to be equal to the chromatic number $\chi(G)$, here we adopt an unnormalized sign-matrix convention, make explicit the connection between the toggling-frame average and the physical cycle propagator, and include bounds on the PRR.
The PRR bounds hold for the row assignment that minimizes the total pulse count, derived in \cref{sec:cwdd}; a general row assignment achieves the same cancellation, generally at a higher PRR.

To illustrate CHaDD, consider the following $C=3$ example.
In this case, $\nu=2$ bits are needed to express every color $c=1,\dots,C$ in binary, and an $N=2^\nu=4$-dimensional Hadamard matrix suffices to decouple the graph Hamiltonian to first order in time, as shown in \cref{fig:C=3}.

\begin{figure}[h]
\begin{center}
\begin{tikzpicture}[
    matrixnode/.style={inner sep=2pt},
    arrow/.style={-{Latex[length=3mm]}, thick},
    every node/.append style={font=\large}
]

% The sign matrix
\node (matrix) [matrixnode] {
$\begin{matrix}
    + & + & + & + & \hspace{7mm} \texttt{IIII}\\
    + & - & + & - & \hspace{7mm} \texttt{XXXX} \\
    + & + & - & - & \hspace{7mm} \texttt{IXIX} \\
    + & - & - & + & \hspace{7mm} \texttt{XIXI}
\end{matrix}$};

% Circles and arrows
\node[circle, draw=none, fill=gray!60, minimum size=5mm, left=2cm of matrix.north west, yshift=-9pt] (c1) {};
\node[circle, draw=none, fill=blue!70, minimum size=5mm, left=2cm of matrix.west, yshift=7pt] (c2) {};
\node[circle, draw=none, fill=red!70, minimum size=5mm, left=2cm of matrix.west, yshift=-7pt] (c3) {};
\node[circle, draw=none, fill=green!70!black, minimum size=5mm, left=2cm of matrix.south west, yshift=9pt] (c4) {};

% Draw arrows
\draw[arrow] (c1.east) -- ++(0.5,0) |- ([xshift=-2pt,yshift=-9pt]matrix.north west);
\draw[arrow] (c2.east) -- ++(0.5,0) |- ([xshift=-2pt,yshift=7pt]matrix.west);
\draw[arrow] (c3.east) -- ++(0.5,0) |- ([xshift=-2pt,yshift=-7pt]matrix.west);
\draw[arrow] (c4.east) -- ++(0.5,0) |- ([xshift=-2pt,yshift=9pt]matrix.south west);

\end{tikzpicture}
\end{center}
\caption{Illustration of the color assignment, Hadamard matrix, and DD sequences for $C=3$.}
    \label{fig:C=3}
\end{figure}

The CHaDD sequence designed for the active embedded $C=3$ coloring in \cref{fig:heavy-hex-embedded-3coloring}, and tested in \cref{sec:experiments}, uses the color-to-row map $i=g^*(c)$ given by $g^*(1)=2$, $g^*(2)=3$, and $g^*(3)=1$ (the row-optimized assignment introduced in \cref{sec:cwdd} and exemplified in \cref{fig:C=3}).
The spectator gray qubits, which are not counted among the $C$ colors, are assigned to the constant top row of the Hadamard matrix and therefore idle for the $N=4$ time steps, following the sequence \texttt{IIII}.
These sequences are read from left to right, with each character corresponding to a time step of duration $\tau$: \texttt{I} signifies free evolution $f_\tau$ (idling) for duration $\tau$, while \texttt{X} indicates that the qubit idles for $\tau-\delta$ and then undergoes an $X$ pulse of duration $\delta$.
These pulses are right-aligned so that in the ``bang-bang'' limit as $\delta\rightarrow0$, the pulse occurs exactly at the end of each time step, at an integer multiple of $\tau$, and the sign matrix rows can be interpreted as the values of a piecewise constant modulation function in each time step.
Row $i=2$ changes sign between columns $j$ and $(j+1) \bmod N$ in time steps $j=1,3$, so red qubits ($c=1$) follow the sequence \texttt{IXIX}.
Row $i=3$ changes sign in time steps $j=0,2$, so green qubits ($c=2$) follow \texttt{XIXI}, while row $i=1$ changes sign in every time step, so blue qubits ($c=3$) follow \texttt{XXXX}.

\section{Results}
\label{sec:results}

\Cref{thm:chadd} uses the minimum exponent $\nu_{\min}=\lceil\log_2(C+1)\rceil$.
More generally, any Hadamard matrix $W_\nu$ with $\nu\geq\nu_{\min}$ gives the same first-order decoupling, with circuit depth $N=2^\nu$; the depth is larger than necessary when $\nu>\nu_{\min}$.
We show below that this can be advantageous.
Hadamard matrices are not the only sign matrices capable of achieving this decoupling.
Only two conditions on the sign matrix are essential: each color must be assigned to a row that adds up to zero, so that $\overline{H}_1=H_1^x$, and qubits of different colors must be assigned to orthogonal rows, so that all but the $XX$ crosstalk between them is suppressed, whether or not the differently colored qubits are connected by an edge in the graph.

To formalize these observations, the map
\begin{align}
    \Phi:\mathbb{F}_2^\nu &\longrightarrow
    \left\langle \widetilde{X}_1,\dots,\widetilde{X}_C\right\rangle,
    &
    \Phi(j)&=U_j ,
\end{align}
where $\widetilde{X}_c$ and $U_j$ are respectively defined in \cref{eq:xtilde,eq:chadd-decoupling-unitary}, is a group homomorphism since
\begin{align}
    U_jU_k
    =
    \bigotimes_{c=1}^C
    \widetilde{X}_c^{g(c)\cdot(j\oplus k)}
    =
    U_{j\oplus k},
\end{align}
where $\oplus$ denotes bitwise addition modulo 2.
Let $r=\dim\operatorname{span}\{g(c):1\leq c\leq C\}$ over $\mathbb{F}_2$.
Then $\ker\Phi$ is the orthogonal complement of this span, and the group of distinct control unitaries is
\begin{align}
    \mathcal{G}=\operatorname{Im}\Phi
    \cong \mathbb{F}_2^\nu/\ker\Phi
    \cong \mathbb{F}_2^r .
\end{align}
Consequently, the $N=2^\nu$ columns list each distinct control unitary $2^{\nu-r}$ times.
For the minimum choice $\nu=\lceil\log_2(C+1)\rceil$, the selected row indices necessarily span $\mathbb{F}_2^\nu$: otherwise they would lie in a proper subspace, which contains at most $2^{\nu-1}-1<C$ nonzero vectors.
Thus $r=\nu$ and $|\mathcal{G}|=N$ for minimum-depth CHaDD, whereas a larger Hadamard matrix can yield a non-trivial kernel and repeated control unitaries.
The pulse at the end of time step $j$ is $U_{(j+1)\bmod N}U_j^\dagger$.
Permuting the columns merely reorders the same toggling Hamiltonians and therefore preserves the first-order average (provided the all-$+$ column remains first, so that no pulse precedes the first time step), while generally changing the number and timing of pulses.

These observations motivate the new sequences we present in this section, which share the first-order average with CHaDD, but, for $C>2$, start from larger Hadamard matrices.

\subsection{Chromatic-Binary DD (CBDD)}
\label{sec:cbdd}

As noted above, any sign matrix whose color-assigned rows add up to zero and are pairwise orthogonal suffices to schedule DD sequences for decoherence and crosstalk suppression.
A $(C+1) \times 2^C$ binary matrix
\begin{align}
\label{eq:binary-matrix}
    B_C
    \equiv
    \sum_{i=0}^C \sum_{j=0}^{2^C-1} (-1)^{ j \cdot 2^{C-i} } \ketb{i}{j}
\end{align}
is one such example, where each column $j=0,\dots,2^C-1$ represents $j$ as a $(C+1)$-bit string of signs $+1$ and $-1$, corresponding to 0s and 1s respectively in the bitstring representation of $j$, with the least significant bit in row $i=C$ and the most significant (leading zero) bit in row $i=0$.
This convention yields a top row and leftmost column of all $+$'s, as for the Hadamard matrix, e.g., for $C=3$,
\begin{align}
\label{eq:ex-binary-matrix}
    B_3
    =
    \begin{bmatrix}
        + & + & + & + & + & + & + & + \\
        + & + & + & + & - & - & - & - \\
        + & + & - & - & + & + & - & - \\
        + & - & + & - & + & - & + & - \\
    \end{bmatrix}
    .
\end{align}
To explain the $(i,j)$th entry of this matrix, $(-1)^{j \cdot 2^{C-i}}$,
note that the binary representation of $2^{C-i}$ has a single $1$, in position $C-i$ (counting from the least significant bit in position $0$), so the dot product $j \cdot 2^{C-i}$ extracts the bit of $j$ in that position; for the nonconstant rows $i=1,\dots,C$, these binary representations are the $C$ standard basis vectors of $\mathbb{F}_2^C$.
E.g., for $C=3$, column $j=5$ is represented in $C+1=4$ bits, with a leading zero, as $j=0101$, and translated into signs as $[+-+-]^T=B_3\ket{5}$. 

For each nonconstant row $i=1,\dots,C$ of $B_C$,
\begin{align}
    \bra{i} B_C
    =
    \sum_{j=0}^{2^C-1} (-1)^{ j \cdot 2^{C-i} } \bra{j}
    =
    \bra{2^{C-i}} W_C .
\end{align}
The constant row satisfies $\bra{0}B_C=\bra{0}W_C$.
Thus $B_C$ is a row submatrix of the dimension-$2^C$ Hadamard matrix $W_C$, which is larger than the minimum matrix $W_\nu$, with $\nu=\lfloor\log_2 C\rfloor+1$, whenever $C>2$.
The resulting circuit depth $N=2^C$ scales exponentially with the number of colors rather than linearly.
Nevertheless, the averaging argument of \cref{thm:chadd} applies unchanged to the $C$ nonconstant rows, so Chromatic-Binary DD (CBDD) yields $\overline{H}_G=H_1^x+H_2^{xx}$.

\begin{mytheorem}[Single-axis Chromatic-Binary DD (CBDD)]
\label{thm:cbdd}
A circuit of depth $N=2^C$ with $\PRR<2/C$, involving only $X$ pulses, suffices to cancel to first order in $\tau$ 
all terms in $H_G$ excluding $H_1^x $ and $H_2^{xx}$
on a qubit connectivity graph $G$ that can be properly $C$-colored.
\end{mytheorem}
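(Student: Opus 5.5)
The proof has two parts: a decoupling claim, which I reduce to the averaging argument of \cref{thm:chadd}, and a pulse-count claim, which I settle by directly counting sign changes in the rows of $B_C$.

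\emph{Decoupling.} Assign color $c$ to row $i=c$ of $B_C$, so that $g(c)=2^{C-c}$ as a $C$-bit string and $U_j=\bigotimes_c \widetilde{X}_c^{\,j\cdot 2^{C-c}}$. The single-qubit terms are handled first. For $\alpha\in\{y,z\}$ and $v\in V_c$, the conjugated term $U_j^\dagger \sigma_v^\alpha U_j$ equals $\bra{c}B_C\ket{j}\,\sigma_v^\alpha$. Each nonconstant row of $B_C$ is a nonconstant Hadamard row $\bra{2^{C-c}}W_C$, so it sums to zero, and these terms average away.

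Next come the two-body terms on an edge $uv\in E_{c_1,c_2}$. Properness gives $c_1\neq c_2$. A term $\sigma_u^\alpha\sigma_v^\beta$ picks up the product of the entries in rows $c_1$ and $c_2$, raised to the indicator that $\alpha$ or $\beta$ differs from $x$. There are three cases:
\begin{itemize}
\item For $xx$ the sign is $+1$ and the term survives.
\item For $x\beta$ or $\alpha x$ with the other factor not $x$, a single row sum appears, which vanishes.
\item Otherwise the inner product of two distinct Hadamard rows appears, which vanishes.
\end{itemize}
Hence $\overline{H}_G=H_1^x+H_2^{xx}$. By the Baker--Campbell--Hausdorff/average-Hamiltonian statement \cref{eq:decoupling-averaged-hamiltonian}, this holds to first order in $\tau$, exactly as in \cref{thm:chadd}. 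The depth is the number of columns, $N=2^C$.

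\emph{PRR.} This is the step that needs an actual count. Row $c$ reads the bit of weight $2^{C-c}$ of $j$, so it consists of alternating blocks of length $2^{C-c}$, giving $2^c$ blocks in total. The number of sign changes per cycle, including the wrap-around, equals the number of blocks, so $P_c=2^c$. The wrap-around change is present because the row has an even number $2^c$ of blocks, so the last block is $-$ while column $0$ is $+$. Summing gives $P=\sum_{c=1}^C 2^c=2^{C+1}-2$, and therefore
\[
\PRR=\frac{P}{NC}=\frac{2^{C+1}-2}{C\,2^C}=\frac{2}{C}-\frac{2}{C2^C}<\frac{2}{C}.
\]

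The only point needing care is the choice of row-to-color map. Since every assignment of distinct nonconstant rows gives the same multiset $\{P_c\}$, the PRR does not depend on which color gets which row. Likewise, the wrap-around convention must be applied consistently with the CHaDD counting in \cref{sec:chadd}.
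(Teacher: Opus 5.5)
Your proof is correct and follows essentially the paper's route: the decoupling reduces to the CHaDD averaging argument because each nonconstant row of $B_C$ is a distinct nonconstant row of $W_C$ (you spell out the row-sum and orthogonality cases rather than just citing \cref{thm:chadd}), and your block count $P_c=2^c$, including the wrap-around, is the paper's bit-flip-frequency count, giving $P=2^{C+1}-2$ and $\PRR<2/C$. One wording fix: in the two-body case each row entry is raised to its own indicator (row $c_1$ to $[\alpha\neq x]$, row $c_2$ to $[\beta\neq x]$), not the product raised to a joint indicator; your three-case analysis already uses this correct form.
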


\begin{proof}
As before, let $G=(V,E)$ be a graph representing a 2-local Hamiltonian $H_G$ to be decoupled, and let $f$ be a proper $C$-coloring of $G$.
Assign the colors bijectively to the $C$ nonconstant rows of $B_C$ via a permutation $g:\{1,\dots,C\}\to\{1,\dots,C\}$, with $i=g(c)$.
Because every assigned row is a distinct nonconstant row of $W_C$, the averaging argument of \cref{thm:chadd} gives $\overline{H}_G=H_1^x+H_2^{xx}$.
The circuit depth is $N=2^C$.

An $X$ pulse is applied to qubits of color $c$ in time step $j$ whenever there is a sign change in row $g(c)$ between columns $j$ and $(j+1) \bmod N$. 
Since the least significant bit of a binary integer changes with every increment, qubits assigned to the bottom row of the binary matrix, $i=C$, undergo an $X$ pulse during every time step.
The next least significant bit changes every two time steps, so qubits assigned to the second from the bottom row of the binary matrix, $i=C-1$, undergo an $X$ pulse once every two time steps, and so on up to the most significant non-leading-zero bit, corresponding to row $i=1$ of the binary matrix, which changes sign once halfway through the sequence and once more, restoratively, at the wrap-around, resulting in two $X$ pulses.
Thus, qubits of color $c$ assigned to row $i=g(c)$ undergo $P_c=2^i$ pulses per CBDD sequence repetition.
This results in a total of $P=\sum_{i=1}^C 2^i = 2^{C+1}-2$ pulses across $C$ colors and $N=2^C$ time steps, for
a PRR of
\begin{align}
\label{eq:cbdd-prr}
    \frac{2^{C+1}-2}{C ~ 2^C} < \frac2C.
\end{align}
\end{proof}

The CBDD control unitaries are
\begin{align}
    \label{eq:cbdd-group}
    U_j = \bigotimes_{c=1}^C
    \widetilde{X}_c^{j \cdot 2^{C-g(c)}} .
\end{align}
Because $g$ is a permutation, the binary representations of the row indices $2^{C-g(c)}$ run over all $C$ standard basis vectors of $\mathbb{F}_2^C$.
Hence the map $j\mapsto U_j$ is injective,
$U_jU_k=U_{j\oplus k}$, and the CBDD control group has size $2^C$ and is isomorphic to the additive group of the vector space $\mathbb{F}_2^C$.

Note that because the binary representations of the row indices $2^{C-g(c)}$ of $B_C$ are the $C$ standard basis vectors of $\mathbb{F}_2^C$, and are thus linearly independent, the entrywise product of the rows assigned to any nonempty set of colors is again a nonconstant row of $W_C$.
Consequently, in terms of the $K$-local Hamiltonian of \cref{eq:k-local-hamiltonian}, CBDD cancels to first order every $k$-body term whose $Y$- and $Z$-type factors act on an odd number of qubits of at least one color, not only the two-body terms of $H_G$.
CHaDD row assignments need not share this property: e.g., the entrywise product of rows 1, 2, and 3 of $W_\nu$ is the constant row, so a $YYY$ term acting on three differently colored qubits assigned to these rows survives the $C=3$ CHaDD average.
A systematic treatment of such selective $k$-local cancellations, based on classical codes, was recently given in Ref.~\cite{Nguyen2026color}.

Note that $C=2$, in addition to being too small to be illustrative, produces a sign matrix whose rows are equal to rows $2$ and $1$ of the Hadamard matrix featured in \cref{fig:C=3}. Thus, 
we illustrate how to extract the CBDD pulse sequence from a sign matrix with a $C=3$ example; see \cref{fig:C=3-CBDD}.

\begin{figure}[h]
\begin{center}
\begin{tikzpicture}[
    matrixnode/.style={inner sep=2pt},
    arrow/.style={-{Latex[length=3mm]}, thick},
    every node/.append style={font=\large}
]

% The sign matrix
\node (matrix) [matrixnode] {
$\begin{matrix}
    + & + & + & + & + & + & + & + & \hspace{5mm} \texttt{IIIIIIII} \\
    + & + & + & + & - & - & - & - & \hspace{5mm} \texttt{IIIXIIIX} \\
    + & + & - & - & + & + & - & - & \hspace{5mm} \texttt{IXIXIXIX} \\
    + & - & + & - & + & - & + & - & \hspace{5mm} \texttt{XXXXXXXX} \\
\end{matrix}$};

% Circles and arrows
\node[circle, draw=none, fill=gray!60, minimum size=5mm, left=1cm of matrix.north west, yshift=-9pt] (c1) {};
\node[circle, draw=none, fill=red!70, minimum size=5mm, left=1cm of matrix.west, yshift=7pt] (c2) {};
\node[circle, draw=none, fill=green!70!black, minimum size=5mm, left=1cm of matrix.west, yshift=-7pt] (c3) {};
\node[circle, draw=none, fill=blue!70, minimum size=5mm, left=1cm of matrix.south west, yshift=9pt] (c4) {};

% Draw arrows
\draw[arrow] (c1.east) -- ++(0.5,0) |- ([xshift=-2pt,yshift=-9pt]matrix.north west);
\draw[arrow] (c2.east) -- ++(0.5,0) |- ([xshift=-2pt,yshift=7pt]matrix.west);
\draw[arrow] (c3.east) -- ++(0.5,0) |- ([xshift=-2pt,yshift=-7pt]matrix.west);
\draw[arrow] (c4.east) -- ++(0.5,0) |- ([xshift=-2pt,yshift=9pt]matrix.south west);

\end{tikzpicture}
\end{center}
\caption{Illustration of the color assignment, binary matrix $B_3$, and CBDD sequences for $C=3$, with the identity color-to-row map $i=g(c)=c$.}
    \label{fig:C=3-CBDD}
\end{figure}

To translate the example sign matrix for $C=3$ in \cref{eq:ex-binary-matrix} into a sequence for the active embedded 3-coloring in \cref{fig:heavy-hex-embedded-3coloring}, we take $i=g(c)=c$ and assign the spectator gray qubits, which are not counted among the $C$ colors, to the constant row $i=0$.
Since the top row consists of only $+$'s, the gray qubits do not undergo any pulses and idle for the duration of the sequence, $N=2^C=8$ time steps, so they follow the schedule \texttt{IIIIIIII}.
Row $i=1$ changes sign between columns $j$ and $(j+1)\bmod N$ only in time steps $j=3,7$, so red qubits ($c=1$) follow the schedule \texttt{IIIXIIIX}.
Row $i=2$ changes sign in time steps $j=1,3,5,7$, so green qubits ($c=2$) follow \texttt{IXIXIXIX}, while row $i=3$ changes sign in every time step, so blue qubits ($c=3$) follow \texttt{XXXXXXXX}.

\begin{table*}
\centering
\begin{tabular}{ |c|c|c|c| } 
 \hline
 DD sequence & $P$ & $N$ & PRR ($P/NC$) \\ \hline
 CWDD & $\frac{C(C+1)}2 + \lceil \frac{C}2 \rceil$ & $N=2^{\lfloor \log_2 C \rfloor + 1} \in \mathcal{O}(C)$, $C < N \leq 2C$ & $\frac14 + \frac1{2C} \leq \PRR < \frac12 + \frac7{6C}$ \\ 
 CBDD & $2^{C+1}-2$ & $N=2^C$ & $\frac{2^{C+1}-2}{C 2^C} < \frac2C$ \\
 CGDD & $2^C$ & $N=2^C$ & $\frac1C$ \\
 \hline
\end{tabular}
\caption{
The number of pulses per sequence repetition $P$, circuit depth $N$, and pulse repetition rate $\PRR=P/(NC)$ of CWDD (row-optimized linear-depth CHaDD at fixed Hadamard column order; \cref{sec:cwdd}), CBDD, and CGDD.
}
\label{tab:prr-table}
\end{table*}

\subsection{Chromatic-Gray DD (CGDD)}
\label{sec:cgdd}

For $C\geq2$, we can achieve the same first-order average and circuit depth as CBDD with a lower pulse count by rearranging the columns of the binary matrix to minimize the number of sign changes between consecutive columns.
The reflected binary code, or ``Gray code,'' permutation accomplishes exactly this.
Indeed, every cyclic ordering of the $2^C$ distinct columns incurs at least one sign change per transition, so $P\geq2^C$ for any such reordering; the Gray order attains this minimum, since consecutive Gray codewords differ in exactly one bit.
Thus CGDD minimizes the pulse count among sequences that traverse the full control group once per cycle.
At the same depth, sequences that traverse a smaller control group with repeated elements can have a lower pulse count still (e.g., the four lowest-sequency nonconstant rows of $W_4$ give $P=12<16$ at $C=4$); the dilated $C=3$ sequence of \cref{fig:chadd-equivalence-classes}(a) is of this repeated-element type.
The $C$-bit Gray code for integer $j$ is $\gamma_C(j) = j \oplus \lfloor j/2 \rfloor$, where $\lfloor j/2 \rfloor$ is $j$ with its least significant bit deleted.

The $(C+1) \times 2^C$ Gray matrix $G_C$ is then constructed from the binary matrix $B_C$ [\cref{eq:binary-matrix}] by permuting the columns according to $j \mapsto \gamma_C(j)$:
\begin{align}
\label{eq:gray-matrix}
    G_C
    \equiv
    \sum_{i=0}^C \sum_{j=0}^{2^C-1} (-1)^{ \gamma_C(j) \cdot 2^{C-i} } \ketb{i}{j}
    .
\end{align}
The $j$th column is left-padded with a single zero, as with the binary matrix [\cref{eq:binary-matrix}], so that the top row and leftmost column consist of all $+$'s, as for the Hadamard matrix [\cref{eq:hadamard-matrix}];
e.g., for $C=3$, the Gray matrix is
\begin{align}
    \label{eq:ex-gray-matrix}
    G_3
    =
    \begin{bmatrix}
        + & + & + & + & + & + & + & + \\
        + & + & + & + & - & - & - & - \\
        + & + & - & - & - & - & + & + \\
        + & - & - & + & + & - & - & + \\
    \end{bmatrix}
    .
\end{align}
The sum of the entries of each row and the dot product of any two rows are invariant under permutation of the columns, so CGDD inherits the first-order decoupling average and circuit depth of CBDD; in addition, adjacent columns of $G_C$ have Hamming distance 1, so pulses of only a single color are applied in each time step.

\begin{mytheorem}[Single-axis Chromatic-Gray DD (CGDD)]
\label{thm:cgdd}
A circuit of depth $N=2^C$ with $\PRR=1/C$, involving only $X$ pulses, suffices to cancel to first order in $\tau$ 
all terms in $H_G$ excluding $H_1^x $ and $H_2^{xx}$
on a qubit connectivity graph $G$ that can be properly $C$-colored.
\end{mytheorem}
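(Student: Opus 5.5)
The proof reduces CGDD to CBDD (\cref{thm:cbdd}) through a column permutation, and then counts pulses directly from the Gray-code property.

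\emph{Setup.} Let $f$ be a proper $C$-coloring of $G$. Assign the colors bijectively to the nonconstant rows $i=1,\dots,C$ of the Gray matrix $G_C$ [\cref{eq:gray-matrix}] by a permutation $g$. Qubits of color $c$ are pulsed in time step $j$ exactly when row $g(c)$ changes sign between columns $j$ and $(j+1)\bmod 2^C$. The depth is $N=2^C$ by construction.

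\emph{Decoupling.} The map $j\mapsto\gamma_C(j)$ is a bijection of $\{0,\dots,2^C-1\}$, since $j$ is recovered from $\gamma_C(j)$ by prefix XOR. Hence $G_C$ is $B_C$ with its columns permuted. The toggling-frame unitaries of CGDD are therefore the CBDD unitaries $U_{\gamma_C(j)}$ of \cref{eq:cbdd-group}, merely reordered.

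Because $\gamma_C(0)=0$, the all-$+$ column stays first, so no pulse precedes the first step. The first-order average $\overline{H}$ in \cref{eq:decoupling-averaged-hamiltonian} is a sum over the list of conjugating unitaries, so it is invariant under reordering. It therefore equals the CBDD average, which \cref{thm:cbdd} gives as $H_1^x+H_2^{xx}$.

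Equivalently, row sums and pairwise row inner products are permutation invariant. So each assigned row still sums to zero and distinct assigned rows remain orthogonal, and the averaging argument of \cref{thm:chadd} (Appendix~\ref{app:CHaDD-thm-proof}) applies verbatim.

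\emph{Pulse count.} This is the only genuinely new step. I will show that the cyclic sequence $\gamma_C(0),\dots,\gamma_C(2^C-1),\gamma_C(0)$ changes exactly one bit per transition.

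For $0\le j<2^C-1$, standard Gray-code reasoning gives
\[
\gamma_C(j)\oplus\gamma_C(j+1)=(j\oplus(j+1))\oplus(\lfloor j/2\rfloor\oplus\lfloor (j+1)/2\rfloor).
\]
If $j+1$ has $t$ trailing zeros, then $j\oplus(j+1)=2^{t+1}-1$. Shifting this right gives $2^t-1$, so the XOR equals $2^t$, a single bit.

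For the wrap-around, $\gamma_C(2^C-1)=2^{C-1}$ differs from $\gamma_C(0)=0$ only in the most significant bit. Thus every one of the $2^C$ cyclic transitions flips exactly one row, i.e., exactly one color is pulsed per time step.

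\emph{PRR.} Summing over colors gives $P=\sum_c P_c=2^C$, and hence $\PRR=P/(NC)=2^C/(C\,2^C)=1/C$.

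For completeness, I can also note $P_c=2^{g(c)-1}$ for $g(c)\ge 1$. Bit position $C-i$ of $\gamma_C$ flips at transitions where $t=C-i$, and the top bit flips twice including the wrap-around. These per-row counts sum to $2^C$, which gives a consistency check on the total.

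The main obstacle is the wrap-around transition and the bookkeeping between bit position $C-i$ and row $i$. Once these are handled, the result is immediate.
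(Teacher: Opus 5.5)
Your proposal is correct and follows the paper's proof: you reduce to CBDD via the column permutation $j\mapsto\gamma_C(j)$, which leaves row sums and inner products unchanged, and then use the cyclic single-bit-flip property of the reflected Gray code, which you verify explicitly where the paper only asserts it, to get exactly one color pulsed per step, so $P=2^C$ and $\PRR=1/C$. One small slip in your optional per-row check: $P_c=2^{g(c)-1}$ holds only for $g(c)\ge 2$, whereas $g(c)=1$ gives $P_c=2$ (as your own wrap-around remark says), and only with that correction do the per-row counts sum to $2^C$ rather than $2^C-1$.
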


\begin{proof}
Again, let $G=(V,E)$ be a graph representing $H_G$, and let $f$ be a proper $C$-coloring of $G$.
Assign the colors bijectively to the $C$ nonconstant rows of $G_C$ via a permutation $g:\{1,\dots,C\}\to\{1,\dots,C\}$.
Since the columns of $G_C$ are a permutation of those of $B_C$, the row sums and pairwise inner products are unchanged; hence, by \cref{thm:cbdd}, $\overline{H}_G=H_1^x+H_2^{xx}$ and $N=2^C$.

The reflected Gray code is cyclic:
$\gamma_C(j)$ and $\gamma_C\left[(j+1)\bmod 2^C\right]$ differ in exactly one bit for every $j=0,\dots,2^C-1$, including the wrap-around transition.
Because $g$ permutes all $C$ nonconstant rows, the qubits of exactly one color undergo an $X$ pulse at each of the $N=2^C$ transitions.
Thus $P=2^C$ and $\PRR=P/(NC)=1/C$, which is $1/(2-2^{1-C})$ times the CBDD value of \cref{eq:cbdd-prr}.
\end{proof}

CGDD traverses the same set of control unitaries as CBDD, but in Gray-code order:
\begin{align}
    \label{eq:cgdd-group}
    U_j = \bigotimes_{c=1}^C
    \widetilde{X}_c^{\gamma_C(j)\cdot 2^{C-g(c)}} .
\end{align}
The Gray map $\gamma_C:j\mapsto j\oplus\lfloor j/2\rfloor$ is an invertible linear map on $\mathbb{F}_2^C$.
It therefore relabels the CBDD group by an automorphism:
$U_jU_k=U_{j\oplus k}$.
Consequently, the CBDD and CGDD control groups are the same set and are both isomorphic to the additive group of $\mathbb{F}_2^C$; only their traversal order, and hence their pulse counts, differ.
Moreover, since $\gamma_C$ is linear, each nonconstant row of $G_C$ is itself a row of $W_C$:
row $i$ of $G_C$ is equal to row $R_C[\gamma_C(2^{i-1})]$ of $W_C$, where $R_C$ is bit reversal on $C$ bits, e.g., rows 1, 2, and 3 of \cref{eq:ex-gray-matrix} are equal to rows 4, 6, and 3 of $W_3$.
Thus CGDD, like CBDD, is a CHaDD row assignment at depth $2^C$, and it has the same $k$-local cancellations, since column permutations preserve entrywise row products.

We demonstrate how to extract the CGDD sequence for $C=3$; see \cref{fig:C=3-CGDD}. Again, $C=2$ is unsuitable for an example since the rows of that Gray matrix correspond to rows $2$ and $3$ of the Hadamard matrix in \cref{fig:C=3}.

\begin{figure}
\begin{center}
\begin{tikzpicture}[
    matrixnode/.style={inner sep=2pt},
    arrow/.style={-{Latex[length=3mm]}, thick},
    every node/.append style={font=\large}
]

% The sign matrix
\node (matrix) [matrixnode] {
$\begin{matrix}
    + & + & + & + & + & + & + & + & \hspace{5mm} \texttt{IIIIIIII} \\
    + & + & + & + & - & - & - & - & \hspace{5mm} \texttt{IIIXIIIX} \\
    + & + & - & - & - & - & + & + & \hspace{5mm} \texttt{IXIIIXII} \\
    + & - & - & + & + & - & - & + & \hspace{5mm} \texttt{XIXIXIXI} \\
\end{matrix}$};

% Circles and arrows
\node[circle, draw=none, fill=gray!60, minimum size=5mm, left=1cm of matrix.north west, yshift=-11pt] (c1) {};
\node[circle, draw=none, fill=red!70, minimum size=5mm, left=1cm of matrix.west, yshift=7pt] (c2) {};
\node[circle, draw=none, fill=green!70!black, minimum size=5mm, left=1cm of matrix.west, yshift=-9pt] (c3) {};
\node[circle, draw=none, fill=blue!70, minimum size=5mm, left=1cm of matrix.south west, yshift=9pt] (c4) {};

% Draw arrows
\draw[arrow] (c1.east) -- ++(0.5,0) |- ([xshift=-2pt,yshift=-11pt]matrix.north west);
\draw[arrow] (c2.east) -- ++(0.5,0) |- ([xshift=-2pt,yshift=7pt]matrix.west);
\draw[arrow] (c3.east) -- ++(0.5,0) |- ([xshift=-2pt,yshift=-9pt]matrix.west);
\draw[arrow] (c4.east) -- ++(0.5,0) |- ([xshift=-2pt,yshift=9pt]matrix.south west);

\end{tikzpicture}
\end{center}
\caption{Illustration of the color assignment, Gray matrix $G_3$, and CGDD sequences for $C=3$, with the identity color-to-row map $i=g(c)=c$.}
    \label{fig:C=3-CGDD}
\end{figure}

We can read off the $C=3$ CGDD sequence from \cref{eq:ex-gray-matrix} using the identity color-to-row map $i=g(c)=c$.
The spectator gray qubits in \cref{fig:heavy-hex-embedded-3coloring}, which are not counted among the $C$ colors, are assigned to the constant row $i=0$ and therefore follow \texttt{IIIIIIII} for the $N=2^C=8$ time steps.
Row $i=1$ changes sign between columns $j$ and $(j+1)\bmod N$ in time steps $j=3,7$, so red qubits ($c=1$) follow the schedule \texttt{IIIXIIIX}.
Row $i=2$ changes sign in time steps $j=1,5$, so green qubits ($c=2$) follow \texttt{IXIIIXII}, while row $i=3$ changes sign in time steps $j=0,2,4,6$, so blue qubits ($c=3$) follow \texttt{XIXIXIXI}.

\begin{figure}[t]
\centering
\includegraphics[width=0.98\columnwidth]{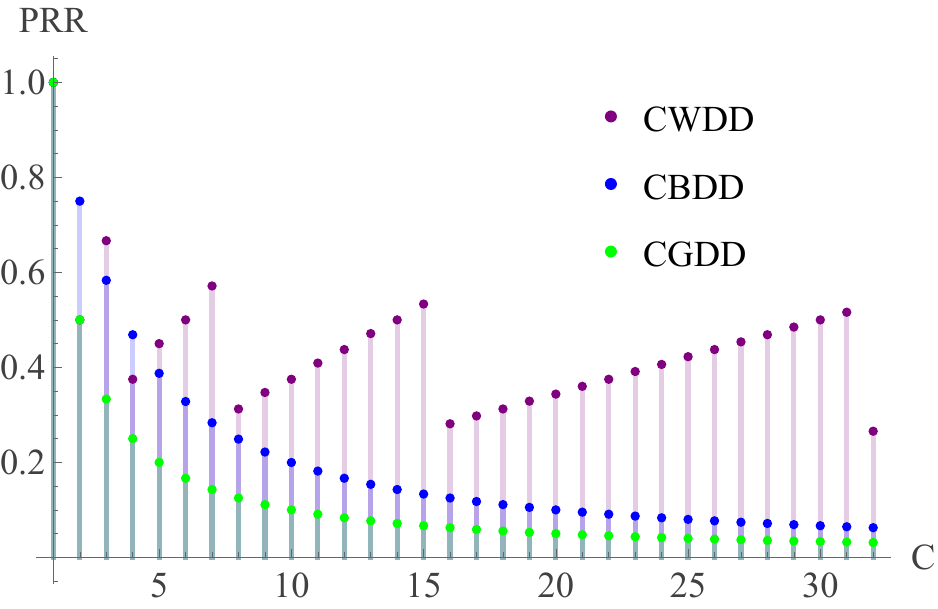}
\caption{
PRR as a function of the number of colors $C$ for CWDD, i.e., row-optimized linear-depth CHaDD at fixed Hadamard column order (purple dots), CBDD (blue dots), and CGDD (green dots).
Except at small $C$ (all three coincide at $C=1$; CGDD and CWDD have the same PRR for $C=2$; and CWDD has lower PRR than CBDD for $C=2$ and $C=4$), CWDD has substantially greater PRR than CBDD or CGDD, while CBDD's PRR is $(2-2^{1-C})$ times CGDD's.
The jumps in the CWDD curve occur because its circuit depth $N=2^{\lfloor\log_2C\rfloor+1}$ changes discontinuously with $C$.
}
\label{fig:PRR-comparison}
\end{figure}

\subsection{Chromatic-Walsh DD (CWDD)}
\label{sec:cwdd}

At fixed Hadamard column order, the PRR of a linear-depth CHaDD sequence depends on which Hadamard rows are assigned to the colors.
A Walsh matrix is a row permutation of the Hadamard matrix ordered by \textit{sequency} $s$, the number of sign changes between adjacent columns, excluding the cyclic wrap-around.
Each Walsh matrix row $s$ has sequency $s$ and begins with $+1$, so its final sign differs from its initial sign exactly when $s$ is odd; its cyclic pulse count is therefore
\begin{align}
    P_s=s+(s\bmod 2).
\end{align}
Consequently, among row assignments at this fixed column order, the minimum total pulse count is obtained by choosing the $C$ nonconstant Walsh rows of lowest sequency, $s=1,\dots,C$.
The corresponding Hadamard row index is
$g^*(c)=R_\nu\left[\gamma_\nu(c)\right]$ for $c=1,\dots,C$, where $\gamma_\nu$ is the $\nu$-bit Gray map and $R_\nu$ is bit reversal on $\nu$ bits.
We refer to this row-optimized schedule as chromatic-Walsh DD (CWDD).
Walsh modulation has previously been used to construct and analyze single-qubit DD and error-suppressing control sequences~\cite{Hayes:2011aa,Ball:2015aa,Qi:2017aa}.
When $C=2^\nu-1$ (e.g., $C=3$), every bijection uses all $N-1$ nonconstant rows, so the total pulse count, and hence the PRR, is independent of the row assignment; only the per-color pulse counts differ.

The CWDD total pulse count is
\begin{align}
\label{eq:prr-optimal-chadd-pulse-count}
    P
    =\sum_{c=1}^C\left[c+(c\bmod2)\right]
    =\frac{C(C+1)}2+\left\lceil\frac C2\right\rceil ,
\end{align}
and hence
\begin{align}
    \frac C2+1
    \leq \frac PC
    =\frac{C+1}{2}+\frac1C\left\lceil\frac C2\right\rceil
    \leq \frac C2+\frac76 ,
\end{align}
where $\frac12\leq C^{-1}\lceil C/2\rceil\leq\frac23$ for $C\geq2$; for $C=1$, the corresponding PRR is $1$ and satisfies the bounds below directly.
Using $C<N\leq2C$, the CWDD PRR obeys
\begin{align}
\label{eq:prr-optimal-chadd}
    \frac14+\frac1{2C}
    \leq \PRR
    < \frac12+\frac7{6C},
\end{align}
with equality on the left iff $C\geq2$ is a power of 2 (whence $N=2C$ and $\lceil C/2\rceil=C/2$).
These are the PRR bounds stated in \cref{thm:chadd}.
The bounds optimize only the row assignment at the stated fixed column order; allowing arbitrary column permutations is a separate scheduling problem.

The pulse count $P$, circuit depth $N$, and PRR of CWDD, CBDD, and CGDD are summarized in \cref{tab:prr-table} and plotted in \cref{fig:PRR-comparison}.

\begin{figure*}[t]
\centering
\includegraphics[width=0.37\textwidth]{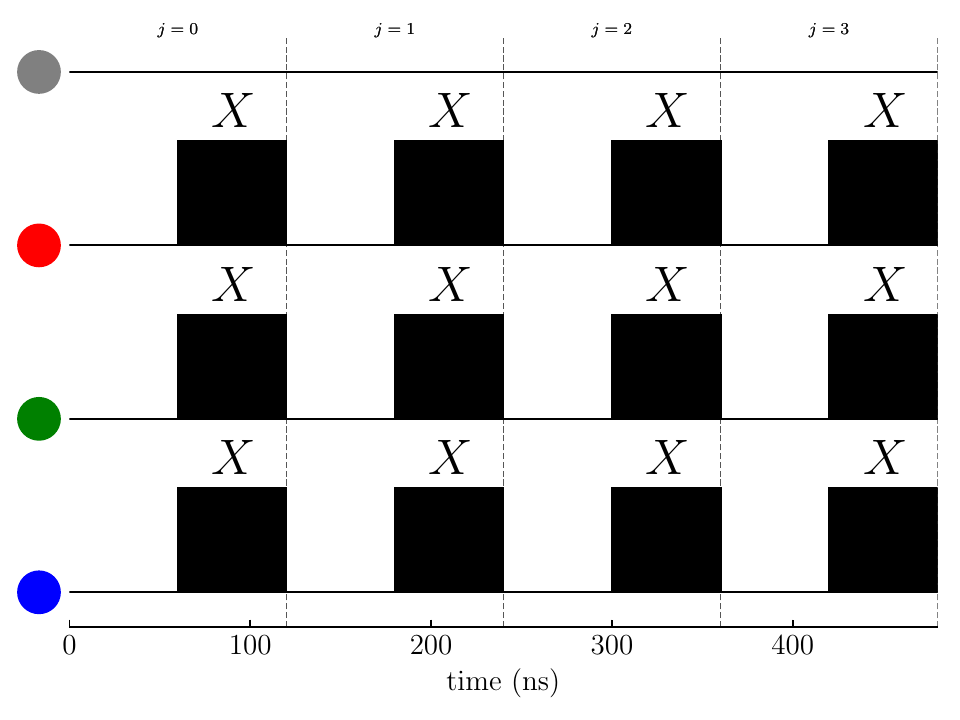}
\includegraphics[width=0.37\textwidth]{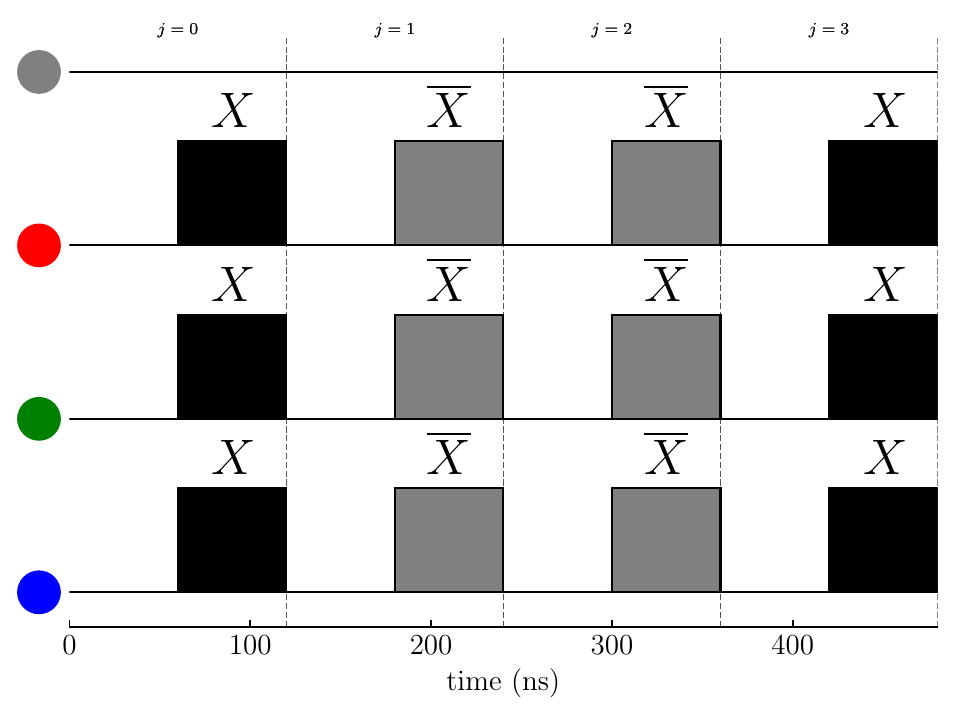}
\\
\includegraphics[width=0.37\textwidth]{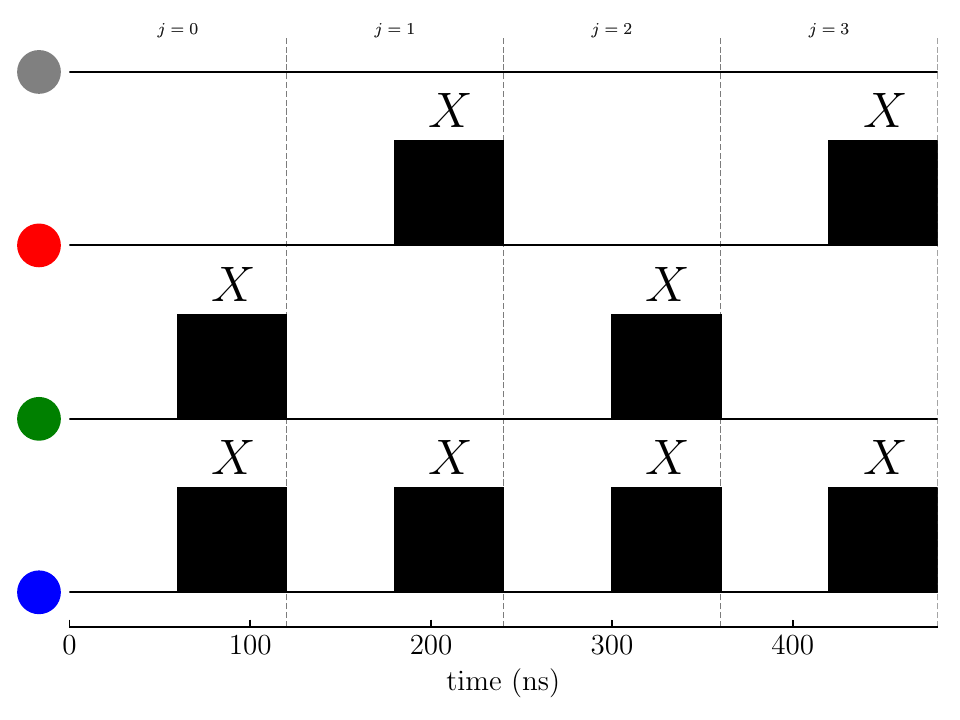}
\includegraphics[width=0.37\textwidth]{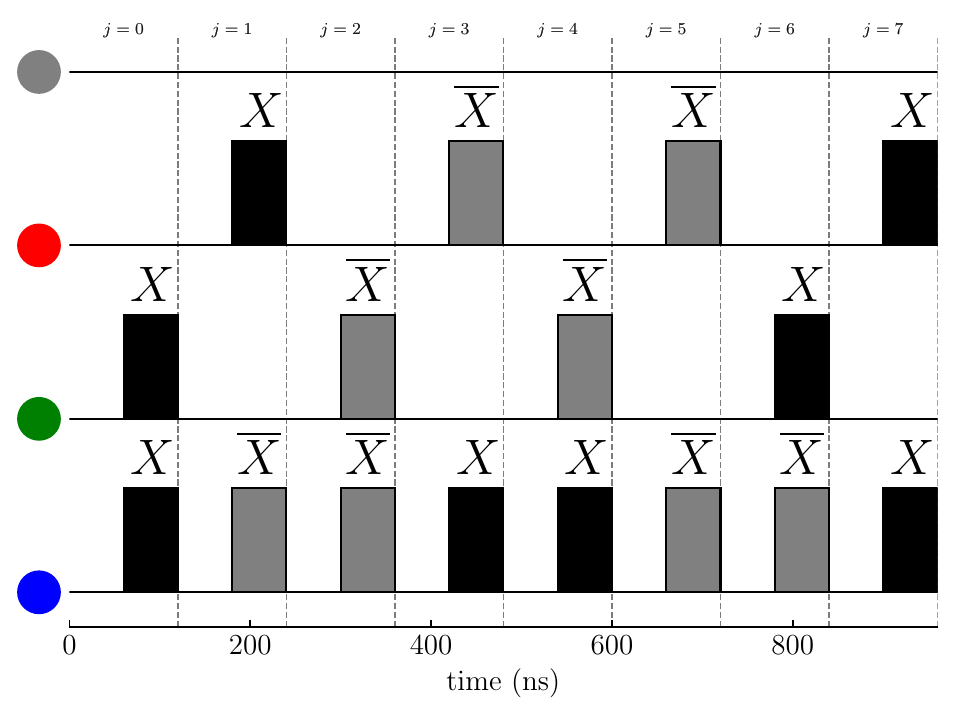}
\\
\includegraphics[width=0.37\textwidth]{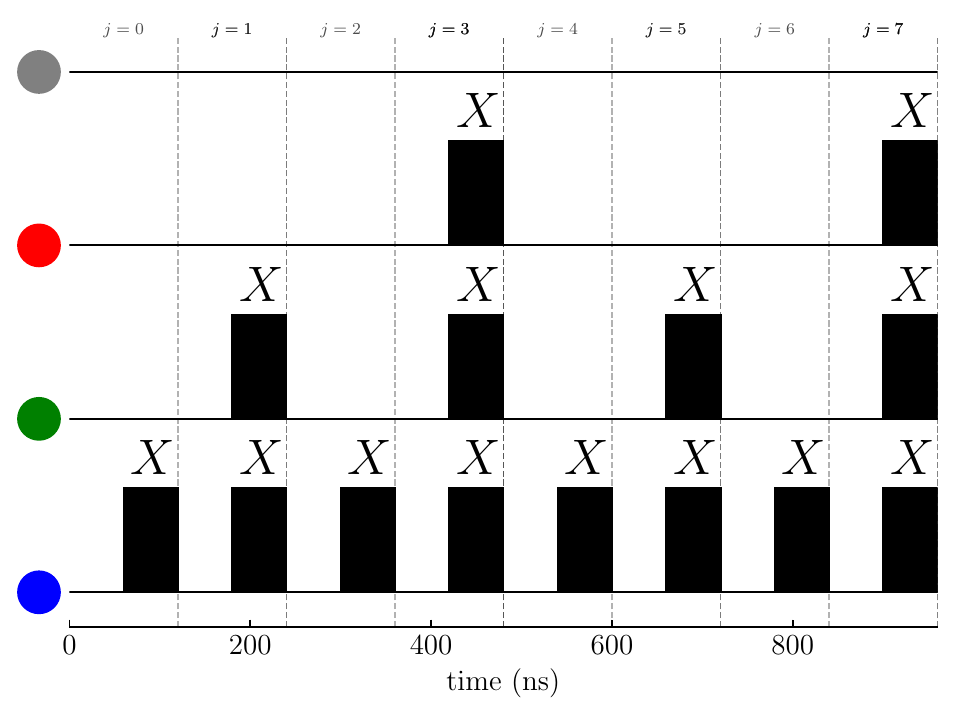}
\includegraphics[width=0.37\textwidth]{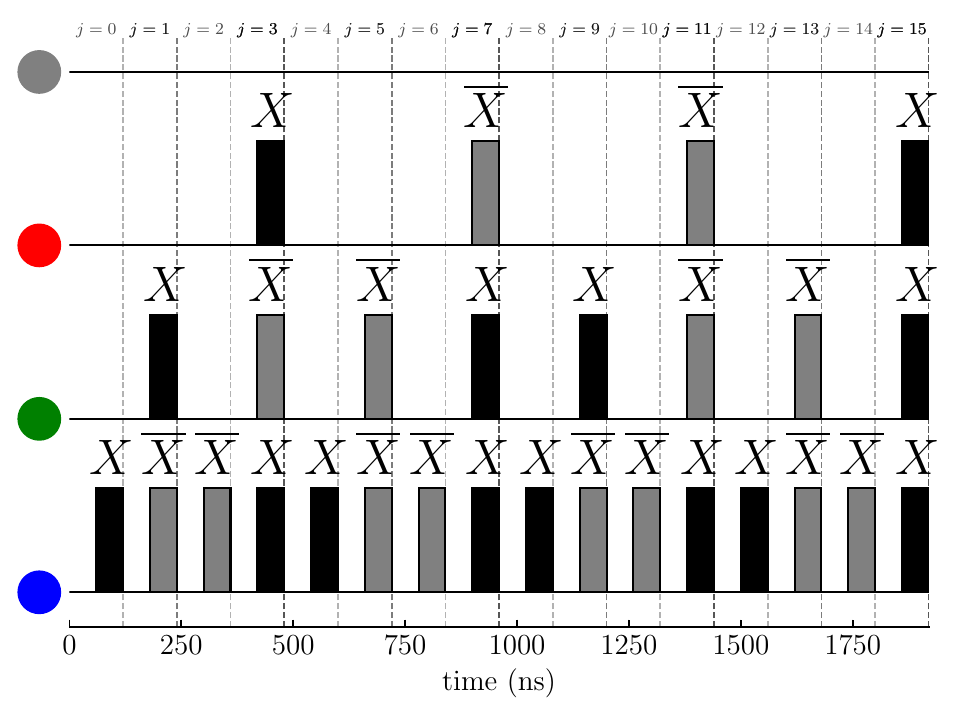}
\\
\includegraphics[width=0.37\textwidth]{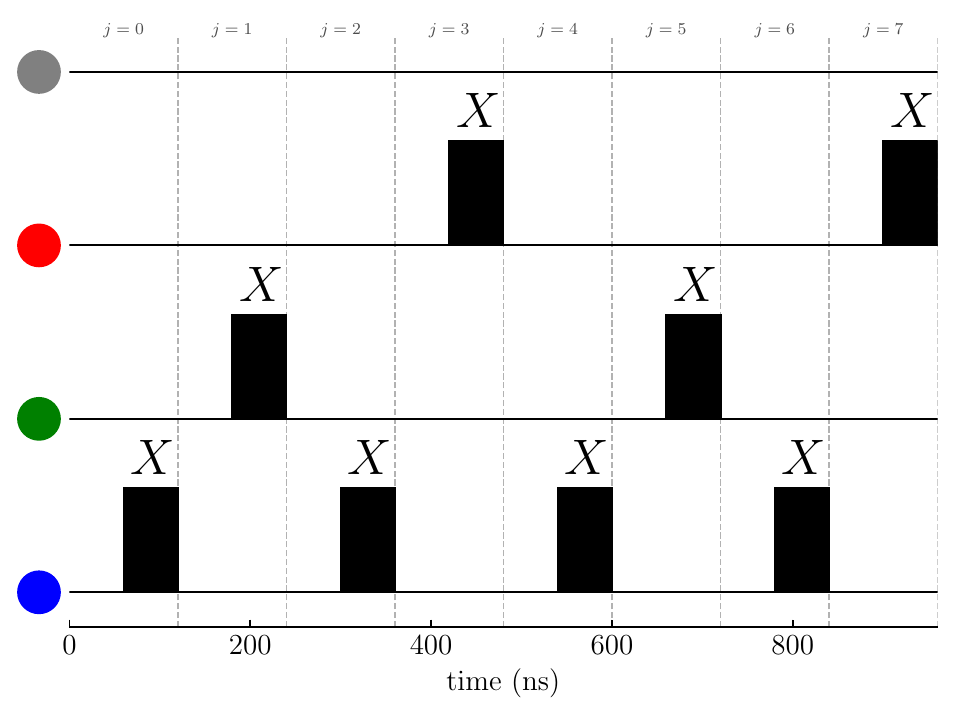}
\includegraphics[width=0.37\textwidth]{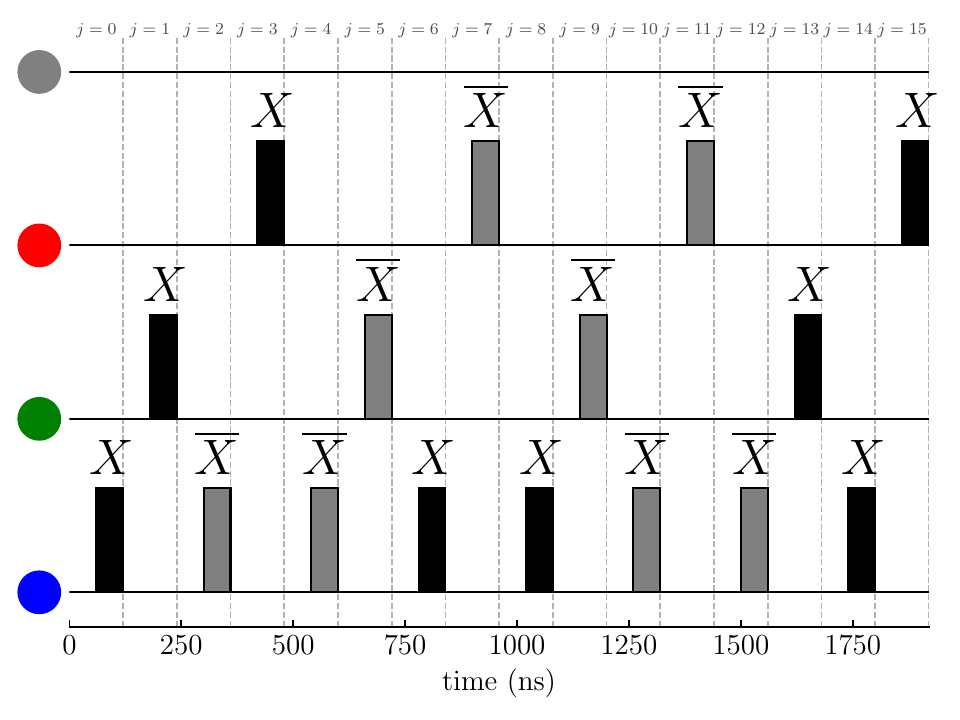}
\caption{
Timeline diagrams for embedded $C=3$ chromatic DD sequences with pulse width $\delta=60\,\mathrm{ns}$ and time interval $\tau=2\delta=120\,\mathrm{ns}$.
Multiples of $\tau$ are marked with vertical gray dashed lines and annotated above by the time step $j=0,1,\dots,N-1$ they mark the end of.
Left column, top to bottom: XX, the pure-$X$ DD sequence on RGB qubits and idle on gray qubits;
CHaDD, CBDD, and CGDD.
Right column, top to bottom: UR4, CHaDD-R, CBDD-R, CGDD-R,
the robust versions of their left-column counterparts,
in which, within each block of four $X$ pulses, the second and third are replaced with $\overline{X}$ pulses, i.e., $\pi$ pulses of opposite phase; the resulting four-pulse block $X\,\overline{X}\,\overline{X}\,X$ is the universally robust sequence UR4~\cite{Genov2017}.
The sequence is repeated where necessary to obtain a multiple of four pulses on each color, as is needed for every color with only two pulses per repetition (red and green in CHaDD and CGDD, and red in CBDD), in which case the robust chromatic sequences have twice the depth of their non-robust counterparts.
}
\label{fig:bivalent-3coloring-timeline-diagrams}
\end{figure*}

\section{Experiments}
\label{sec:experiments}

\begin{figure*}[t]
(a)
\includegraphics[width=0.98\columnwidth]{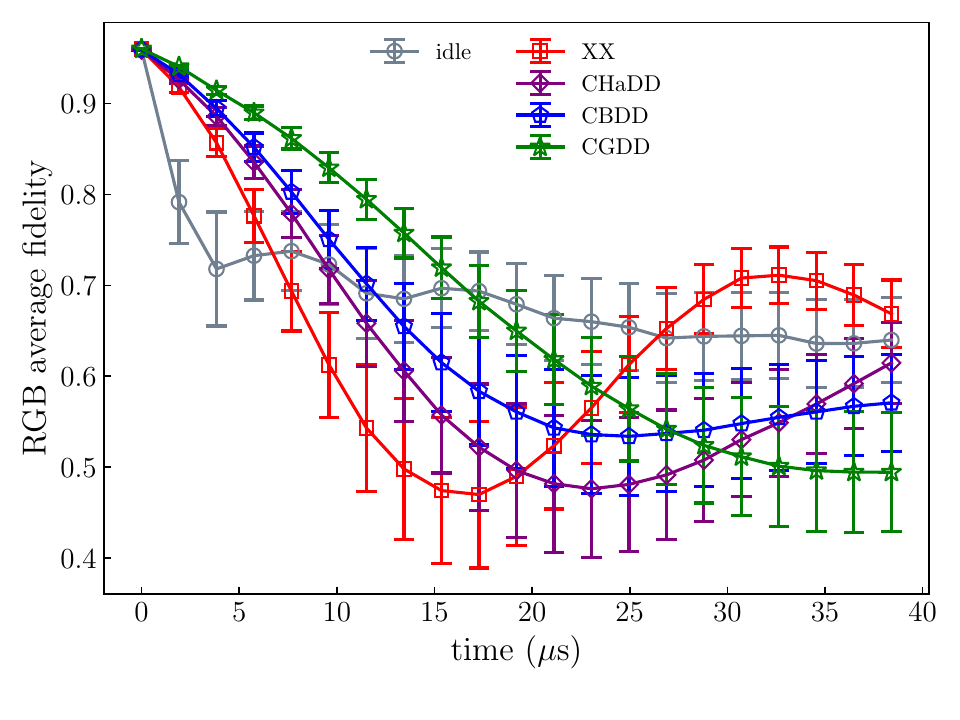}
(b)
\includegraphics[width=0.98\columnwidth]{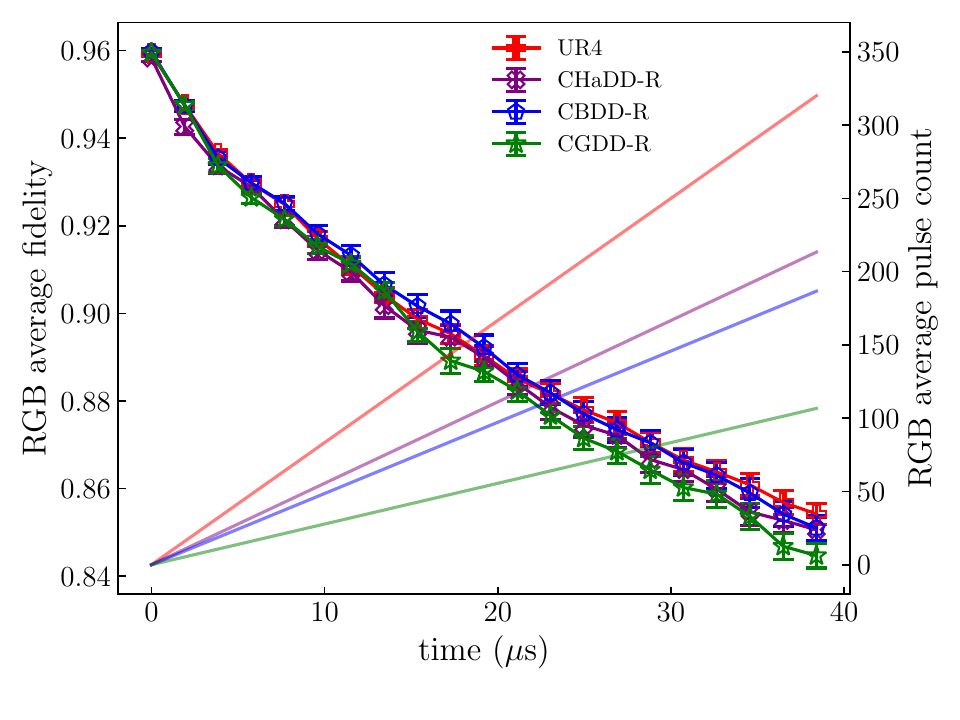}
\caption{
Robustness largely eliminates the PRR advantage.
(a)
Average fidelity of RGB qubits protected by idle evolution, XX, CHaDD, CBDD, and CGDD.
At short times, each successive sequence in the legend preserves the initial states better than the one before it.
The ordering of the non-idle sequences in the legend mirrors the layout of the timeline diagrams in the left column of \cref{fig:bivalent-3coloring-timeline-diagrams}.
(b)
Average fidelity of RGB qubits protected by the robust counterparts of the non-idle sequences at left, depicted as timeline diagrams in the right column of \cref{fig:bivalent-3coloring-timeline-diagrams}, plotted on the left vertical axis and average pulse count of RGB qubits on the right vertical axis, whose slope is RGB PRR$/\tau$.
The difference in performance observed in (a) has been largely eliminated in (b).
Error bars here and throughout are $1\sigma$ in each direction, where $\sigma$ is the standard error of the mean over the $17$ preparation settings, as described in \cref{sec:setup}.
}
\label{fig:robustness-eliminates-prr-advantage}
\end{figure*}

\subsection{Experimental Setup}
\label{sec:setup}

We performed state preservation experiments on the \texttt{ibm\_strasbourg} 127-qubit QPU in the IBM Eagle r3 family, which has single-qubit pulse duration $\delta=60\,\mathrm{ns}$, as in the sequence timeline diagrams in \cref{fig:bivalent-3coloring-timeline-diagrams}.
This family of QPUs lacks tunable couplers~\cite{Sete:2021aa}, which mitigate crosstalk, and thus provides a suitable testbed for crosstalk-suppression sequences.
The heavy-hex lattice alternates edge qubits (bivalent in the bulk) and vertex qubits (trivalent in the bulk).
We average over the edge qubits, colored red, green, and blue (RGB) in \cref{fig:heavy-hex-embedded-3coloring}(b,c) and referred to as the RGB qubits from here on. The vertex qubits are the spectators of the embedded $C=3$ coloring, and restricting the analysis to the RGB qubits allows a direct comparison with higher-$C$ sequences (see \cref{fig:chadd-equivalence-classes}).

We evaluated the ability of each sequence to preserve a set of initial states as follows:
prepare each RGB qubit in the same state, one of the canonical six Pauli eigenstates,
and separately prepare every other qubit in the same state, one of the three spectator preparation states $\ket{0}$, $\ket{+}$, or $\ket{1}$;
apply $m$ repetitions of the DD sequence, for evenly spaced integer $m$ up to the largest number of repetitions that fits within $T_\text{max}=38.4\,\mu\mathrm{s}=320\tau$, with $\tau=2\delta=120\,\mathrm{ns}$;
and, finally, undo the preparation rotations, which should ideally return each qubit to the ground state.
Each of these circuits was run $5000$ times.
Of the $6\times3=18$ preparation settings submitted, one, $\ket{{+}y}_{\rm RGB}\otimes\ket{1}_{\rm spec}$, failed mid-run; the same $17$ completed settings enter every sequence and every time point.
For each preparation setting, we averaged the ground-state outcomes over shots and over the qubits of the color subset of interest (the RGB qubits throughout, and single colors in \cref{fig:color-subsets}).
We plot the mean $\mu$ of the $17$ setting averages, the average fidelity of the repeated sequence, with error bars of $1\sigma$ in each direction, against the total time $mN\tau$, where $\sigma$ is the standard error of the mean over the $17$ preparation settings, i.e., their sample standard deviation divided by $\sqrt{17}$.
On the right vertical axis, the average pulse count of the RGB qubits is plotted against sequence time, and the slope of these lines is the qubit-weighted RGB PRR divided by $\tau$.

\subsection{State Preservation at \texorpdfstring{$C=3$}{C=3}}
\label{sec:c3-comparison}

Recall that $C=2$ CBDD and CGDD coincide with linear-depth CHaDD sequences (\cref{thm:cbdd,thm:cgdd}),
so $C=3$ is the smallest case in which the three DD families can be distinguished.
The $C=3$ sequences tested on \texttt{ibm\_strasbourg} are depicted as chromatic timeline diagrams featuring finite-width square pulses in \cref{fig:bivalent-3coloring-timeline-diagrams},
and the average fidelity of each sequence, taken over the RGB qubits and preparation settings, is depicted in \cref{fig:robustness-eliminates-prr-advantage}.
The tested CHaDD sequence uses the row-optimized assignment $g^*$ of \cref{sec:cwdd}, i.e., it is the $C=3$ CWDD sequence; as noted there, for $C=3$ the total pulse count is the same for every row assignment.
The idle evolution and the non-robust sequences of the left column of \cref{fig:bivalent-3coloring-timeline-diagrams} are plotted in \cref{fig:robustness-eliminates-prr-advantage}(a), and their robust versions, from the right column of \cref{fig:bivalent-3coloring-timeline-diagrams}, in \cref{fig:robustness-eliminates-prr-advantage}(b).

We see that the relative ordering of performance of the non-robust sequences in \cref{fig:robustness-eliminates-prr-advantage}(a) at short times is the reverse of the ordering of their PRR$/\tau$ slopes on the right vertical axis of \cref{fig:robustness-eliminates-prr-advantage}(b);
however, this performance ordering is flattened by the introduction of the robust sequences, which are designed to protect against coherent pulse imperfections.
This difference in performance in \cref{fig:robustness-eliminates-prr-advantage}(a), paired with the similarity in performance in \cref{fig:robustness-eliminates-prr-advantage}(b), supports treating the PRR as a proxy for coherent error accumulated through the DD sequence and as the key sequence selection metric: at equal performance, the sequence with the lowest PRR is favored.

Note that beyond $t\approx15\,\mu\mathrm{s}$, the minimum of every non-robust sequence in \cref{fig:robustness-eliminates-prr-advantage}(a) falls well below the idle curve, and the XX, CHaDD, and CBDD fidelities partially recover, as expected for accumulating coherent pulse error; the robust sequences in \cref{fig:robustness-eliminates-prr-advantage}(b) eliminate this behavior.
The minima occur at roughly equal cumulative pulse numbers
rather than at equal times [compare \cref{fig:color-subsets}(c)], consistent with a fixed coherent error per pulse.
This trend is explored in more detail across single-color qubit subsets in \cref{fig:color-subsets}, where it is even more pronounced, and across $C=3$ and $C=5$ sequences grouped by their RGB PRRs below [\cref{fig:chadd-equivalence-classes}].
Although not shown, the elimination of the ordering of sequence performance when moving to the robust versions was observed for each single-color qubit subset and can be seen in the data available online~\cite{brown2026cgdddatarepo}.

The similarity of the robust-sequence performance is consistent with all three schedules producing the same first-order average Hamiltonian.
The near-coincidence of UR4 with the robust chromatic sequences in \cref{fig:robustness-eliminates-prr-advantage}(b) further indicates that the next-nearest-neighbor couplings among the RGB qubits, which synchronous UR4 does not cancel, are too weak to produce a resolvable fidelity loss on this device.
One might instead attribute this agreement to the sequences sharing a common control group, but they do not: by the homomorphism analysis of \cref{sec:results}, minimum-depth $C=3$ CHaDD has a four-element control group, isomorphic to $\mathbb{F}_2^2$, whereas CBDD and CGDD share an eight-element control group, isomorphic to $\mathbb{F}_2^3$.
The first-order average requires only the row-sum and orthogonality conditions, not any relation among the control groups.

For the non-robust sequences, in contrast, performance is empirically determined essentially solely by the PRR of the relevant qubit subset: e.g., in \cref{fig:color-subsets}(a), CBDD and CGDD pulse the red qubits at the same rate (once every four time steps), and their red-subset fidelities overlap.
CHaDD pulses the red qubits every two time steps and XX every time step, and the red-subset fidelity worsens each time the red PRR doubles.
The same trend is observed for the green qubits in \cref{fig:color-subsets}(b), where the only difference in the schedules is that CBDD pulses the green qubits every two time steps rather than every four, so that CBDD overlaps with CHaDD instead of CGDD.
In \cref{fig:color-subsets}(c), XX, CHaDD, and CBDD all pulse the blue qubits every time step, and their performance overlaps.
Across the panels of \cref{fig:color-subsets}, disjoint qubit subsets pulsed at the same rate exhibit similar fidelities.

The absence of oscillations in the robust sequences disfavors pulse interference as the dominant source of the performance differences between the non-robust sequences.
If closely spaced pulses produced substantial interference, the robust versions would also be expected to exhibit oscillations~\cite{vezvaee2024virtualzgatessymmetric}.
The observations we report here therefore support coherent pulse error as the dominant source, although they do not by themselves exclude every pulse-interference contribution.

\begin{figure*}[t]
\centering
\begin{minipage}[t]{0.386\textwidth}
(a)\\
\includegraphics[width=\linewidth]{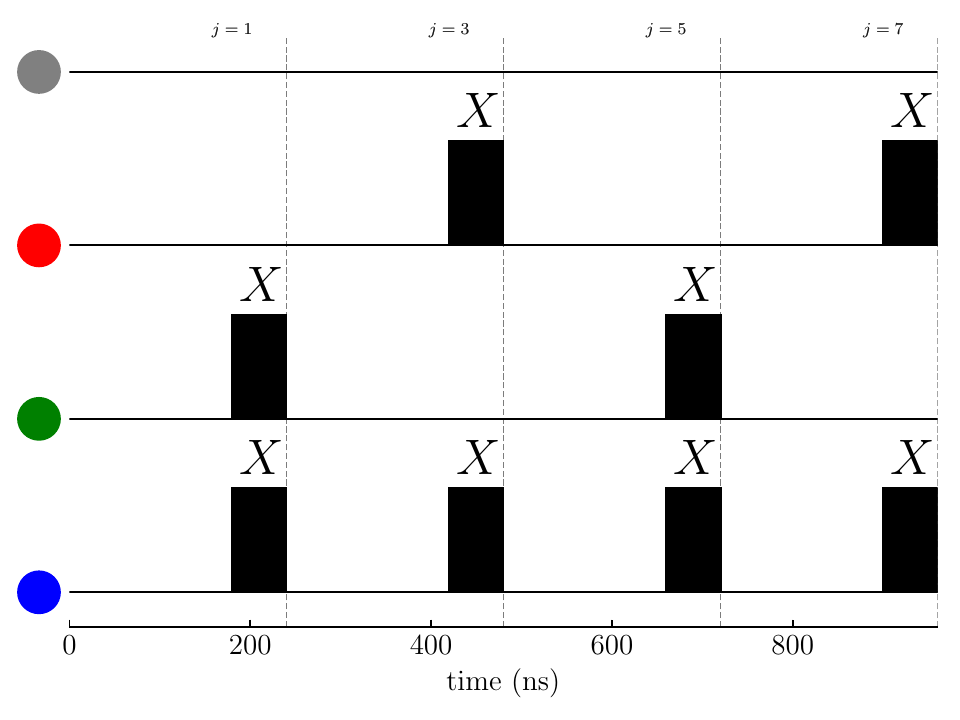}\\
(b)\\
\includegraphics[width=\linewidth]{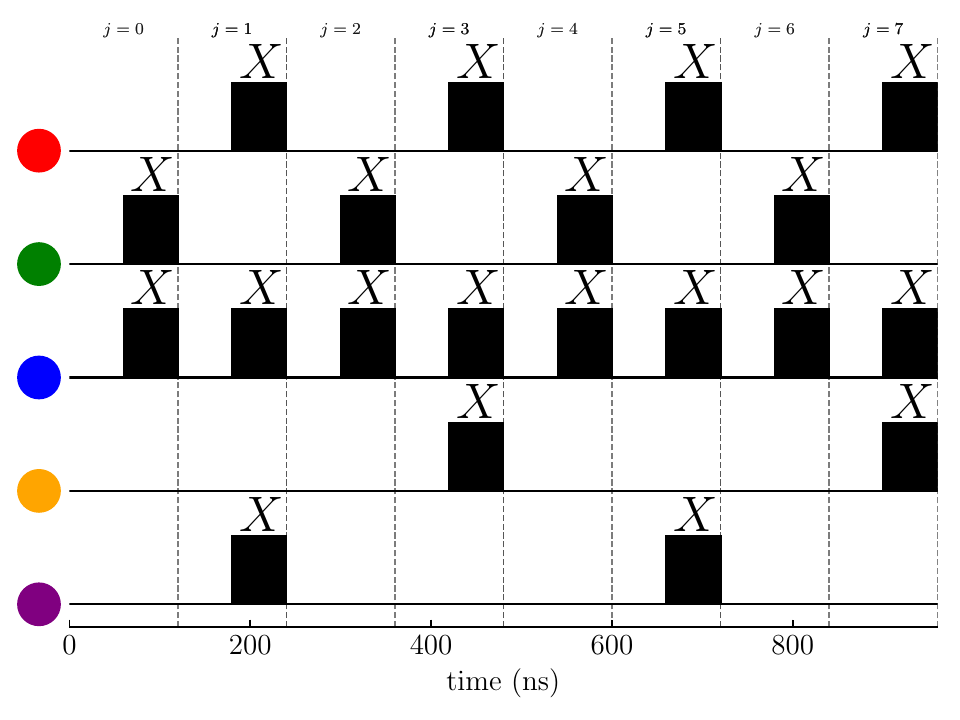}\\
(c)\\
\includegraphics[width=\linewidth]{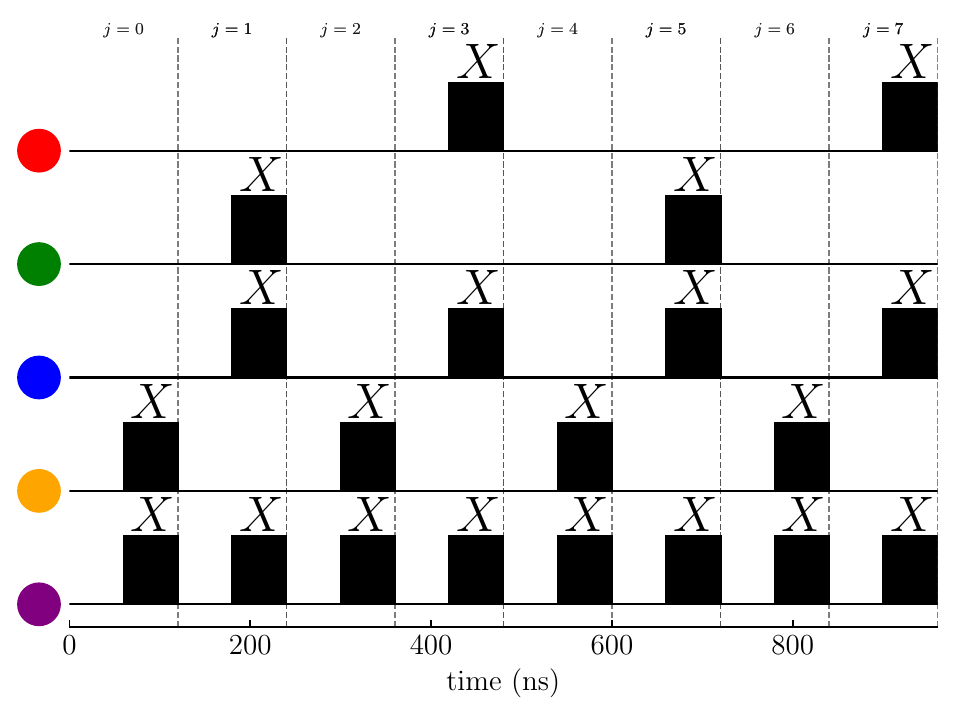}
\end{minipage}\hfill
\begin{minipage}[t]{0.594\textwidth}
(d)\\
\includegraphics[width=\linewidth]{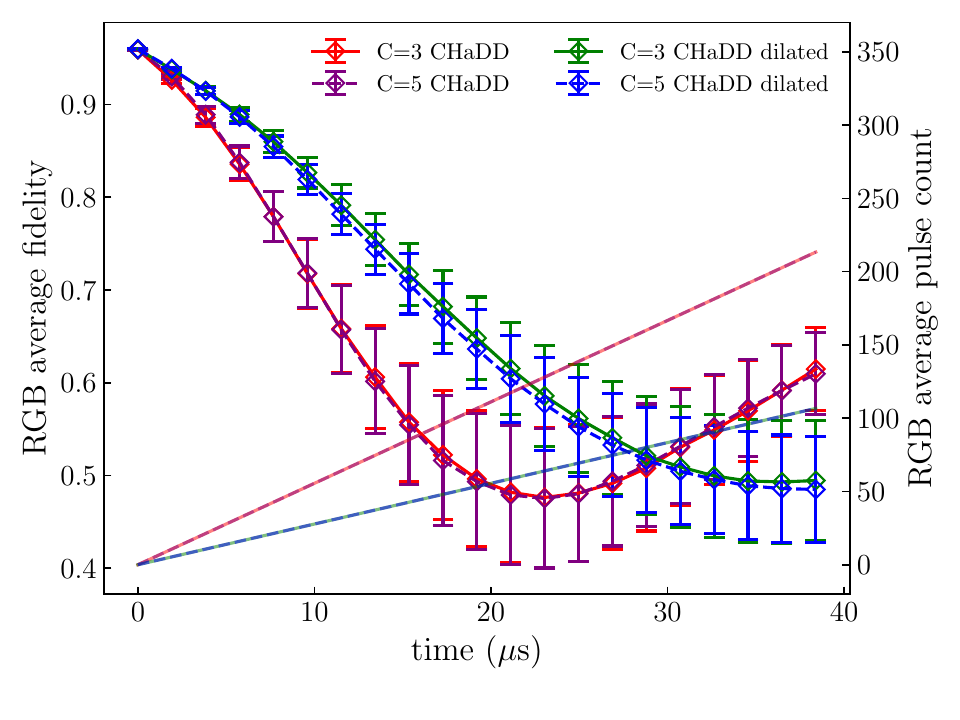}\\
(e)\\
\includegraphics[width=\linewidth]{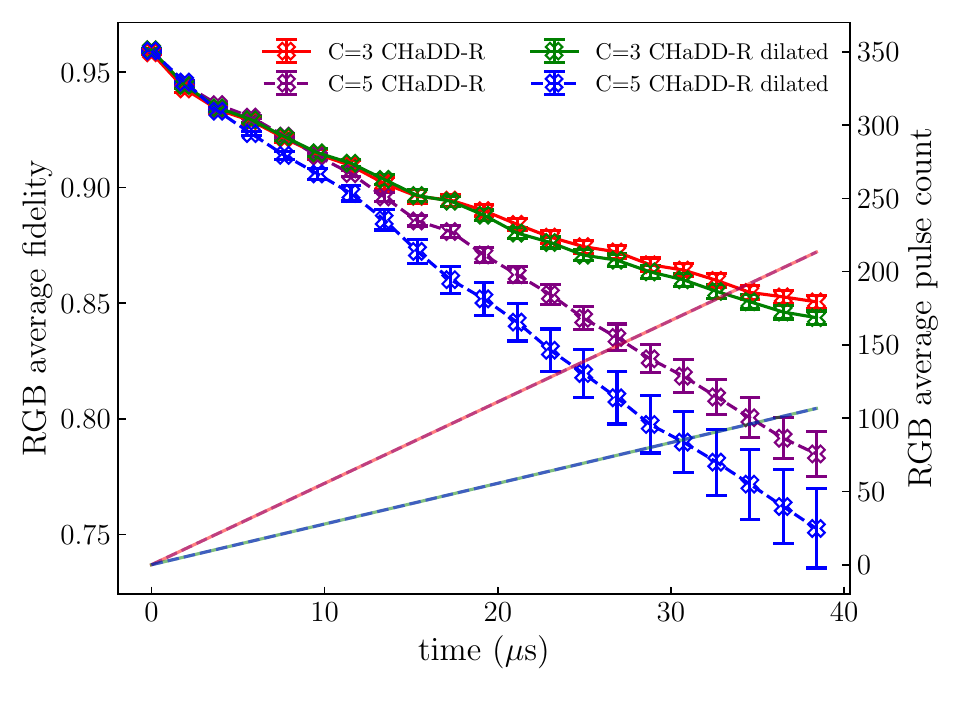}
\end{minipage}
\caption{
CHaDD equivalence classes grouped by RGB PRR;
the basic $C=3$ CHaDD sequence, not repeated here, appears in the second row of the left column of \cref{fig:bivalent-3coloring-timeline-diagrams} [coloring in \cref{fig:heavy-hex-embedded-3coloring}(c)].
(a) $C=3$ CHaDD dilated;
(b) $C=5$ CHaDD [coloring in \cref{fig:heavy-hex-embedded-3coloring}(b)] sharing the RGB rows of the basic $C=3$ sequence;
(c) $C=5$ CHaDD with dilated RGB rows;
(d) RGB average fidelity and average pulse count plotted as a function of time for the basic $C=3$ sequence and the three sequences in panels (a-c);
(e) Same with robust versions of sequences.
For (d) and (e), the legend lists the basic $C=3$ CHaDD sequence followed by the sequences of panels (a-c), and $C=3$ has solid lines while $C=5$ has dashed lines.
}
\label{fig:chadd-equivalence-classes}
\end{figure*}

\begin{figure}[p]
\centering
\setlength{\subfigtopskip}{2pt}\setlength{\subfigcapskip}{1pt}\setlength{\subfigbottomskip}{2pt}
\subfigure[\ ]{\includegraphics[width=0.94\columnwidth]{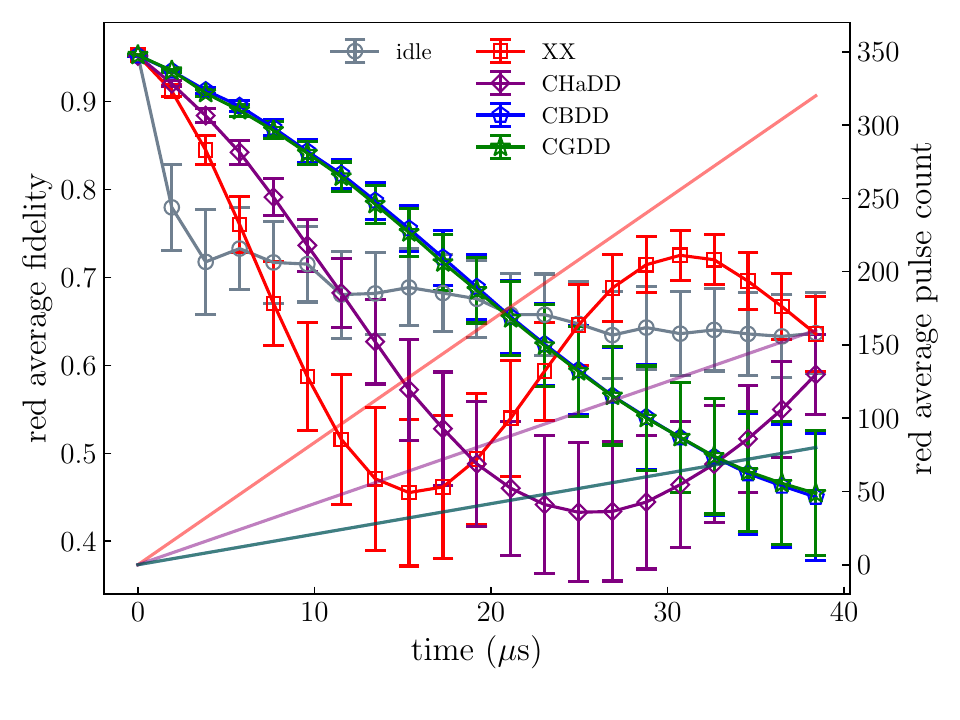}}
\subfigure[\ ]{\includegraphics[width=0.94\columnwidth]{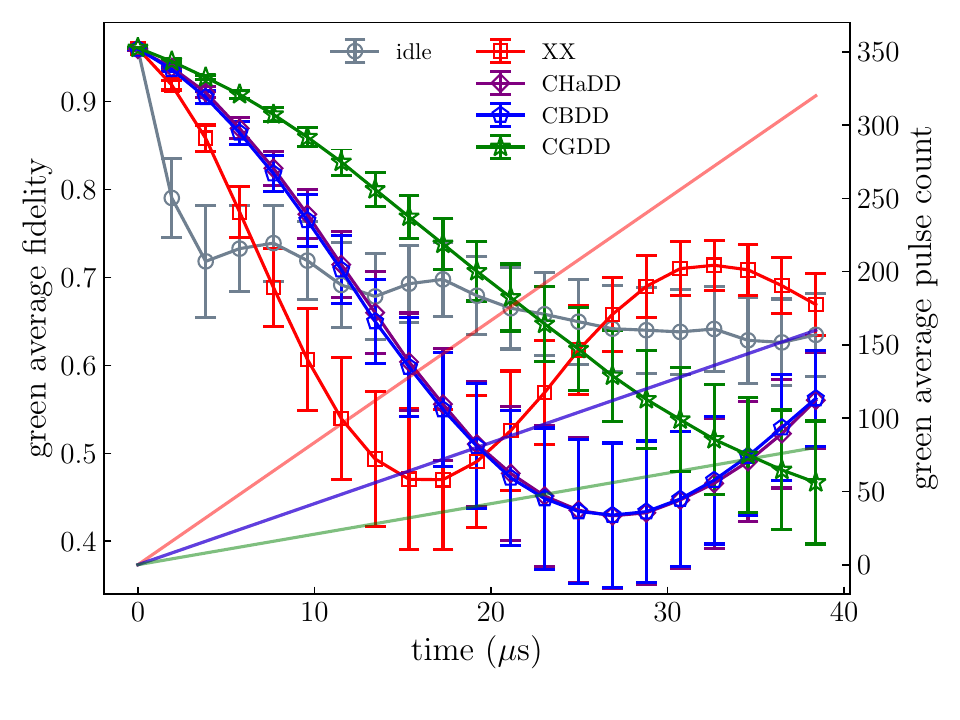}}
\subfigure[\ ]{\includegraphics[width=0.94\columnwidth]{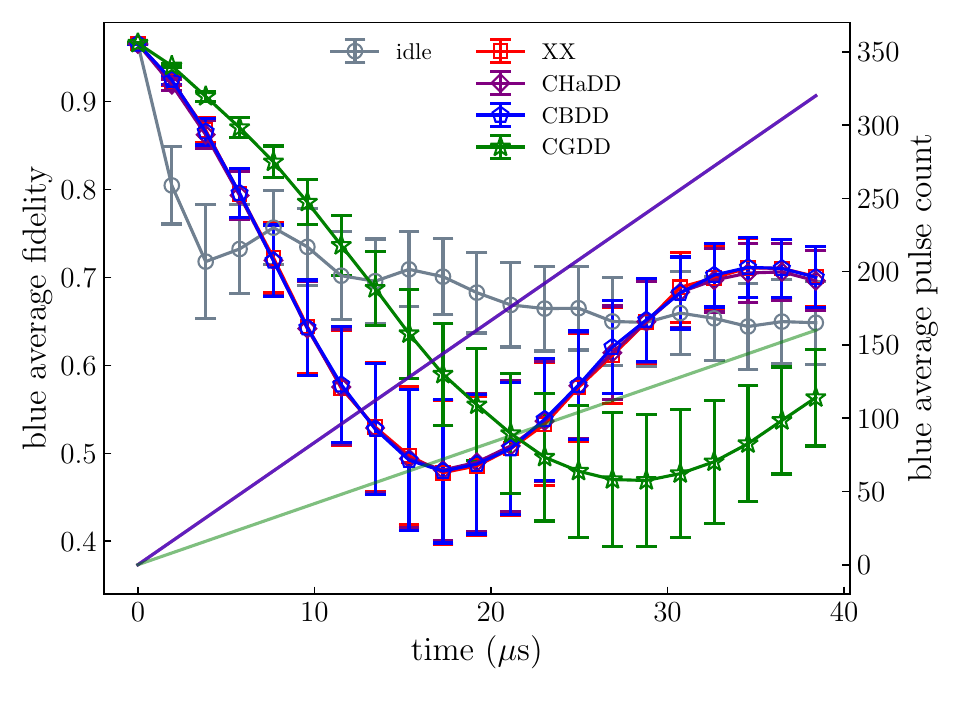}}
\caption{
Average fidelity and pulse count across different color subsets:
(a) red, (b) green, and (c) blue qubits.
Sequences that pulse the qubits of the displayed color at the same rate overlap: on the red qubits (a), CBDD and CGDD (a pulse every four time steps); on the green qubits (b), CHaDD and CBDD (every two time steps); and on the blue qubits (c), XX, CHaDD, and CBDD (every time step).
Curves with equal pulse spacing are moreover similar across panels.
}
\label{fig:color-subsets}
\end{figure}

\subsection{CHaDD Equivalence Classes from Distance-1 3-coloring and Distance-3 5-coloring}
\label{sec:equivalence-classes}

To explore how equivalence classes of DD sequence performance emerge across different numbers of colors $C$, we choose two values for which colorings are shown in \cref{fig:heavy-hex-embedded-3coloring}(c) and (b): $C=3$ and $C=5$, respectively, which share red, green, and blue qubits in common.
Starting with the basic $C=3$ CHaDD sequence, whose timeline diagram appears in the second row of the left column of \cref{fig:bivalent-3coloring-timeline-diagrams}, we create a $C=5$ CHaDD sequence that copies the red, green, and blue schedule (and therefore has the same RGB PRR) and adds orthogonal DD sequences (i.e., DD sequences arising from orthogonal rows of sign matrices) to the orange and purple qubits, which are gray in the $C=3$ coloring; the resulting timeline diagram is depicted in \cref{fig:chadd-equivalence-classes}(b).

We form the dilated $C=3$ schedule of \cref{fig:chadd-equivalence-classes}(a) by inserting one idle time step before every original time step of the basic $C=3$ CHaDD sequence, at fixed $\tau$; this doubles $N$ while leaving the pulse count $P$ unchanged and therefore halves the PRR on every qubit or color subset.
Panel (c) shows a second $C=5$ CHaDD row assignment with the same depth $N=8$ as panel (b): its RGB rows are the dilated rows of panel (a) (rows 4, 6, and 2 of $W_3$), while the orange and purple qubits are pulsed every two time steps and every time step, respectively (rows 3 and 1), so its RGB PRR equals that of panel (a).
Note that panels (a) and (c) share the same cycle time $N\tau$, so their comparison isolates the PRR of the neighboring qubits [\cref{eq:subset-prr}].
The state-preservation performance of these sequences and their robust counterparts is plotted in \cref{fig:chadd-equivalence-classes}(d) and (e), respectively, together with the RGB pulse count on the right vertical axis.
In \cref{fig:chadd-equivalence-classes}(d), the similar performance of the $C=3$ and $C=5$ CHaDD schedules is consistent with their equal RGB PRR by design; the same correspondence holds for the pair of panels (a) and (c), whose common RGB PRR is half as large.
Thus, even for different values of $C$, equal RGB PRR is empirically associated with similar state-preservation performance for non-robust sequences.

The robust sequences in \cref{fig:chadd-equivalence-classes}(e) eliminate the disparity between the two $C=3$ sequences but not between the $C=5$ sequences, which perform worse, and worst when the orange and purple qubits are pulsed most often [panel (c)].
Thus, pulses applied to the neighbors of the RGB qubits degrade RGB state preservation in a way that the robust single-qubit construction does not compensate.
This is consistent with crosstalk generated during the finite-width pulses, which lies outside the instantaneous-pulse analysis, and with the fact that replacing every second and third $X$ pulse with an $\overline{X}$ pulse is merely an ansatz for constructing robust versions of chromatic DD sequences rather than a general robustness proof.

In summary, distance-$d$ colorings with $d>1$ extend protection to couplings of range up to $d$ at the cost of more colors, while for the non-robust sequences tested here, equal PRR on the qubits of interest is empirically associated with similar state preservation across $d$ and $C$.
The PRR serves as a proxy for accumulated coherent error, but robust sequences largely remove this association, up to the residual effect of pulses applied to neighboring qubits.

\section{Conclusions and outlook}
\label{sec:conc}
To summarize, we have introduced CBDD and CGDD, two special cases of CHaDD with non-minimum depth for $C>2$, which preserve initial states better than CHaDD at short times in our non-robust-pulse experiments.
The data support accumulated coherent pulse error as the dominant explanation for the observed performance differences.
We have proved the first-order decoupling properties, circuit depths, and pulse repetition rates of CHaDD, CBDD, and CGDD (\cref{thm:chadd,thm:cbdd,thm:cgdd}) and clarified the relationships among their control groups.
The similarity of their robust counterparts is consistent with their common first-order average Hamiltonian once coherent pulse errors are suppressed.

Although CBDD and CGDD have exponentially rather than linearly scaling circuit depth in $C$, their lower PRRs can make them preferable in non-robust implementations when the longer sequence fits within the available idle interval.
In the experiments reported here, the robust variants perform better overall than their non-robust counterparts.
Moreover, in the robust setting, where CHaDD, CBDD, and CGDD exhibit on par performance, the sequence with the lowest PRR is preferred as it is the least resource intensive; by this standard, the robust CGDD sequence emerges as the top performer.

Still missing is a formal robustness proof for the robust chromatic sequences: replacing the second and third $X$ pulse in each block of four by $\overline{X}$ is an ansatz borrowed from the single-qubit UR$_n$, $n=4$ sequence~\cite{Genov2017}, whose error scaling for arbitrary even sequence length $n$ has recently been proven~\cite{dalessandro2026prooferrorscalinguniversally}. The differences among the robust sequences in \cref{fig:chadd-equivalence-classes} are correlated with the PRR on the qubits neighboring the RGB set, consistent with the ansatz not compensating the effect of pulses applied to those qubits.

Higher-order chromatic sequences exist in the weak-coupling regime: Ref.~\cite{Kim2026highorder} proves that for any group $\mathcal{G}$ that averages the system-bath interaction to zero, at most $(|\mathcal{G}|-1)p$ multi-qubit pulse events at nonuniform intervals suffice to cancel every error term linear in the coupling strength through order $p$ in the total evolution time, and applies this result to CHaDD.
This bound counts pulse events, each of which can pulse several colors at once, whereas the PRR counts the pulses each qubit undergoes at a fixed time step $\tau$, so the pulse-event bound does not determine the PRR.
Combining higher-order cancellation with low PRR and robust pulses, multi-axis chromatic sequences, higher-order chromatic sequences obtained by other constructions (e.g., by concatenation~\cite{Khodjasteh:2005xu} or nested Uhrig sequences~\cite{Wang:10,Xia:2011uq}), experiments at larger $C$ and on other QPUs, and the relation between the sign-matrix conditions used here and the code-based constructions of Ref.~\cite{Nguyen2026color} are natural next steps.
We expect the PRR to remain the relevant figure of merit whenever coherent pulse errors, rather than higher-order decoupling errors, dominate.

\acknowledgments
This material is based upon work supported by, or in part by, the U.S. Army Research Laboratory and the U.S. Army Research Office under contract/grant number W911NF2310255.
This research was supported by the Office of the Director of National Intelligence (ODNI), Intelligence Advanced Research Projects Activity (IARPA) and the Army Research Office, under the Entangled Logical Qubits program through Cooperative Agreement Number W911NF-23-2-0216. We acknowledge funding from the Office of Naval Research under Grant N00014-26-1-2092.

\section*{Data Availability}
The data presented in this paper and the notebooks needed to recreate the plots are available online~\cite{brown2026cgdddatarepo}.

\appendix

\section{Proof of the CHaDD Theorem}
\label{app:CHaDD-thm-proof}

The proof is essentially identical to the one given in Ref.~\cite{brown2024efficient}. 

\begin{proof}[Proof of \cref{thm:chadd}]
Let $G=(V,E)$ be a graph representing a 2-local Hamiltonian $H_G$ to be decoupled, and let $f$ be a proper $C$-coloring of $G$; here $E$ includes every coupling in \cref{eq:2-body-hamiltonian}, intentional or not, so the coloring is proper with respect to all of them.
The bath operators elided in the color-partitioned form of \cref{sec:noise-model} commute with the system-only control unitaries and therefore carry through the first-order averaging below unchanged.

For equal free-evolution intervals and instantaneous $X$ pulses, the $j$th toggling-frame interval propagator is
\begin{align}
    \mathcal{F}_j
    \equiv U_j^\dagger f_\tau U_j
    = \exp\left(-i\tau U_j^\dagger H_G U_j\right),
\end{align}
where $U_j$ is defined in \cref{eq:chadd-decoupling-unitary}.

A single-body term $\sigma_v^\alpha$ anticommutes with $\widetilde{X}_c$ if $f(v)=c$ and $\alpha \neq x$, so it is conjugated by $U_j$ as
\bes
\begin{align}
    \widetilde{X}_c^\dagger \sigma_v^\alpha \widetilde{X}_c
    &=
    (-1)^{ \delta_{c,f(v)} \left(1-\delta_{\alpha x}\right) } \sigma_v^\alpha
    , \\
    U_j^\dagger \sigma_v^\alpha U_j
    &=
    (-1)^{ \left\{g\left[f(v)\right] \cdot j \right\} \left(1-\delta_{\alpha x}\right) } \sigma_v^\alpha
    .
\end{align}
\ees
Thus, the 1-body Hamiltonian $H_1$ [\cref{eq:1-body-hamiltonian}] is conjugated by $U_j$ as
\bes
\begin{align}
    U_j^\dagger H_1 U_j
    &=
    \sum_\alpha \sum_{c=1}^C (-1)^{ \left[g(c) \cdot j \right] \left(1-\delta_{\alpha x}\right) } \sum_{v \in V_c} \sigma_v^\alpha
    \\&=
    H_1^x + \sum_{\alpha \neq x} \sum_{c=1}^C (-1)^{g(c) \cdot j} \sum_{v \in V_c} \sigma_v^\alpha
    .
\end{align}
\ees
Averaging over time steps $j=0,\dots,N-1$, we obtain the decoupling-averaged 1-body Hamiltonian
\bes
\begin{align}
    \overline{H}_1
    &=
    \frac1N \sum_{j=0}^{N-1} U_j^\dagger H_1 U_j
    \\&=
    H_1^x +
    \sum_{\alpha \neq x} \sum_{c=1}^C \frac1N \sum_{j=0}^{N-1} (-1)^{g(c) \cdot j}  \sum_{v \in V_c} \sigma_v^\alpha
    \\&=
    H_1^x
    .
\end{align}
\ees
The only 1-body term that survives the first-order average is $H_1^x$.
Every other $\sigma_v^\alpha$ term with $f(v)=c$ and $\alpha\neq x$ is modulated by $(-1)^{g(c) \cdot j}$, whose sum over $j$ vanishes because $g(c)\neq0$.

A two-body term $\sigma_u^\alpha \sigma_v^\beta$ with $f(u)=c_1 \neq c_2=f(v)$ is conjugated by $U_j$ as
\begin{align}
    U_j^\dagger \sigma_u^\alpha \sigma_v^\beta U_j
    &=
    (-1)^{ \left[g(c_1) \cdot j \right] \left(1-\delta_{\alpha x}\right) }
    (-1)^{ \left[g(c_2) \cdot j \right] \left(1-\delta_{\beta x}\right) }
    \sigma_u^\alpha \sigma_v^\beta
    .
\end{align}
Thus, $\sigma_u^x \sigma_v^x$ commutes with every $U_j$;
for $\alpha\neq x$, $\sigma_u^\alpha \sigma_v^x$ is modulated by $(-1)^{g(c_1) \cdot j}$;
for $\beta\neq x$, $\sigma_u^x \sigma_v^\beta$ is modulated by $(-1)^{g(c_2) \cdot j}$;
and, for $\alpha,\beta\neq x$, $\sigma_u^\alpha \sigma_v^\beta$ is modulated by
$(-1)^{g(c_1) \cdot j} (-1)^{g(c_2) \cdot j}$.
Since $g$ is injective, $c_1 \neq c_2$ implies $g(c_1) \neq g(c_2)$.
The corresponding Hadamard rows are therefore distinct and orthogonal:
\begin{align}
    \frac1N \sum_{j=0}^{N-1}
    (-1)^{g(c_1)\cdot j}(-1)^{g(c_2)\cdot j}
    =0
    .
\end{align}
Consequently, all but the $\sigma_u^x \sigma_v^x$ terms are averaged away, so the decoupling-averaged 2-body Hamiltonian is $\overline{H}_2=H_2^{xx}$ and
\begin{align}
    \overline{H}_G
    = \overline{H}_1 + \overline{H}_2
    = H_1^x + H_2^{xx}
    .
\end{align}

Let $T=N\tau$.
The Baker-Campbell-Hausdorff expansion now connects this average Hamiltonian to the physical cycle propagator:
\begin{align}
    \mathcal{F}_{N-1}\cdots\mathcal{F}_0
    &=
    \exp\left(-iT\overline{H}_G\right)+O\left(T^2\|H_G\|^2\right)
    \notag\\
    &=
    \exp\left[-iT\left(H_1^x+H_2^{xx}\right)\right]+O\left(T^2\|H_G\|^2\right)
    .
\end{align}
This establishes the claimed cancellation to first order in $\tau$.
Note that this is a fixed-$N$ estimate: the $O(T^2\|H_G\|^2)$ remainder bound is small only when the cycle time $T=N\tau$ satisfies $T\|H_G\|\ll1$, a condition that becomes more stringent as $N$ grows exponentially with $C$ for CBDD and CGDD.
Next,
\begin{align}
    C < N
    =2^\nu
    =2^{\lfloor\log_2 C\rfloor+1}
    \leq 2C
    ,
\end{align}
which proves the circuit-depth bound.

It remains to establish the PRR bounds.
By \cref{eq:prr-optimal-chadd-pulse-count}, the row assignment $g^*$ of \cref{sec:cwdd} minimizes the total pulse count per repetition, $P=\frac{C(C+1)}2+\lceil\frac C2\rceil$; combined with $C<N\leq2C$, this yields \cref{eq:prr-optimal-chadd}.
\end{proof}

\bibliographystyle{apsrev4-2}

\end{document}